\newcommand\unit{\hbox{\rm 1\kern-2.8truept l}}
\numberwithin{equation}{section}
\newtheorem{Theorem}{Theorem}[section]
\newtheorem{Lemma}[Theorem]{Lemma}
\newtheorem{Proposition}[Theorem]{Proposition}
 { \theoremstyle{definition}

\newtheorem{Remark}[Theorem]{Remark} }
\begin{document}

\newcommand{\arXivNumber}{2202.02196}

\renewcommand{\thefootnote}{}

\renewcommand{\PaperNumber}{035}

\FirstPageHeading

\ShortArticleName{The Generalized Fibonacci Oscillator as an Open Quantum System}

\ArticleName{The Generalized Fibonacci Oscillator\\ as an Open Quantum System\footnote{This paper is a~contribution to the Special Issue on Non-Commutative Algebra, Probability and Analysis in Action. The~full collection is available at \href{https://www.emis.de/journals/SIGMA/non-commutative-probability.html}{https://www.emis.de/journals/SIGMA/non-commutative-probability.html}}}

\Author{Franco FAGNOLA~$^{\rm a}$, Chul Ki KO~$^{\rm b}$ and Hyun Jae YOO~$^{\rm c}$}

\AuthorNameForHeading{F.~Fagnola, C.K.~Ko and H.J.~Yoo}

\Address{$^{\rm a)}$~Mathematics Department, Politecnico di Milano, Piazza L. da Vinci 32, I-20133 Milano, Italy}
\EmailD{\href{mailto:franco.fagnola@polimi.it}{franco.fagnola@polimi.it}}
\URLaddressD{\url{https://www.mate.polimi.it/qp/}}

\Address{$^{\rm b)}$~University College, Yonsei University, 85 Songdogwahak-ro, Yeonsu-gu, Incheon 21983, Korea}
\EmailD{\href{mailto:kochulki@yonsei.ac.kr}{kochulki@yonsei.ac.kr}}

\Address{$^{\rm c)}$~School of Computer Engineering and Applied Mathematics,\\
\hphantom{$^{\rm c)}$}~Institute for Integrated Mathematical Sciences, Hankyong National University,\\
\hphantom{$^{\rm c)}$}~327 Jungang-ro, Anseong-si, Gyeonggi-do 17579, Korea}
\EmailD{\href{mailto:yoohj@hknu.ac.kr}{yoohj@hknu.ac.kr}}

\ArticleDates{Received February 07, 2022, in final form April 19, 2022; Published online May 11, 2022}

\Abstract{We consider an open quantum system with Hamiltonian $H_S$ whose spectrum is given by a generalized Fibonacci sequence weakly coupled to a Boson reservoir in equilibrium at inverse temperature~$\beta$. We find the generator of the reduced system evolution and explicitly compute the stationary state of the system, that turns out to be unique and faithful, in terms of parameters of the model. If the system Hamiltonian is generic we show that convergence towards the invariant state is exponentially fast and compute explicitly the spectral gap for low temperatures, when quantum features of the system are more significant, under an additional assumption on the spectrum of~$H_S$.}

\Keywords{open quantum system; Fibonacci Hamiltonian; deformation of canonical commutation relations; spectral gap; weak-coupling limit; quantum Markov semigroup}

\Classification{81S22; 81S05; 60J80}

\renewcommand{\thefootnote}{\arabic{footnote}}
\setcounter{footnote}{0}

\section{Introduction}

Deformations of CCR and CAR have been extensively investigated in the literature. $q$-deformed commutation relations are
defined by means of a single parameter $q$ in the interval $[-1,1]$ of annihilation and creation operators $a$ and $a^\dagger$
satisfying $a a^\dagger - q a^\dagger a = \unit$ and the CCR or CAR are recovered in the limit as $q\to 1^{-}$ or $q \to -1^{+}$
(see \cite{Bo,BoKuSp,BLW,Mouayn} and the references therein).

Recently, the inclusion of two distinct deformation parameters $r$,~$q$ has been proposed to allow more flexibility while
retaining the good properties and the possibility of finding explicit formulas as in the case of single parameter deformations
(see \cite{Bl,DaKi,GaKa,BM} and the references therein). The two parameters deformed commutation relations
become $a a^\dagger - r a^\dagger a = q^N$, $a a^\dagger - q a^\dagger a = r^N$ where $N$ is the number operator
(see Section \ref{sect:Fibonacci_osc} for precise definitions) in their one-mode Fock space representation.

In this way one finds a quantum system with Hamiltonian $H_S=a^\dagger a$ which is a two parameter deformation of the harmonic
oscillator whose spectrum $\{(r^n-q^n)/(r-q)\}_{n\geq 0}$ is a generalized Fibonacci sequence (that turns out to be the
well-known Fibonacci sequence for $r=\big(1+\sqrt{5}\big)/2$, $q=\big(1-\sqrt{5}\big)/2$) and therefore is called Fibonacci Hamiltonian.
Two parameters Hermite polynomials have been computed and the energy spectrum has been studied showing that the deformation
is more effective in highly excited states (see~\cite{GaKa,BM}). Deformed Fock spaces and deformed Gaussian processes have
been analyzed in connection with the single-parameter deformation of the full Fock space of free probability~\cite{Bl}.
Moreover, the arising quantum algebra with two deformation parameters has been considered in applications to certain physical
models~\cite{DaKi,GaKa}.

In this paper we consider the $r$,~$q$ deformed oscillator with Hamiltonian $H_S$ weakly coupled to a boson reservoir at inverse
temperature $\beta$ as an open quantum system. At first, we rigorously deduce the reduced dynamics of the open system in the weak
coupling limit \cite{AcLuVo} getting a quantum Markov semigroup (QMS) which is a natural two-parameter deformation of the so-called
quantum Ornstein--Uhlenbeck semigroup~\cite{CaMa,CiFaLi} and has a generator in the well-known Gorini--Kossakowski--Lindblad--Sudharshan (GKLS)
form, generalized to allow unbounded operators in the case where one of the parameters is bigger than $1$ that presents more
difficulties (see~\cite{Bo}). We emphasize that, as the reader may immediately note from the proof of
Theorem~\ref{th:unique-inv-st}, the choice of a parameter bigger than $1$ is necessary in order to find an equilibrium
state for the dynamics of the reduced system in order not to break the physical principle of thermal relaxation~\cite{AGGQ,AcLuVo}.

In our analysis, we pay a special attention to the structure of the spectrum of $H_S$
whose order plays an important role in determining the GKLS generator motivating the emergence of natural inequalities
among the two parameters~$r$,~$q$. In particular, the key conditions $-1\leq q\leq 1<r$ and $r+q\geq 1$ that
appear throughout the paper, are not for mere convenience because they affect the order of the spectrum and, as a consequence,
the QMS that emerges after the weak coupling limit. However, they still allow us to analyze the behaviour of the system for parameters
$r\to 1^+$ and $q\to 1^{-}$, when the $r$, $q$ deformed commutation are ``near'' the CCR, for comparison with the quantum
Ornstein--Uhlenbeck semigroup~\cite{CaMa,CiFaLi}.

We then focus on the case where the spectrum of $H_S$ is generic (see last part of Section~\ref{sect:QMS-WCLT} for the precise
definition) in which the GKLS generator takes a simpler form (see \cite{CaSaUm,FV} and the references therein).
We emphasize that this happens for almost all choices of the deformation parameters $r$, $q$ (in the sense of Lebesgue measure on $\mathbb{R}^2$, for instance). We show that the arising QMS has a normal invariant state, which is unique and faithful, and investigate
the speed of convergence towards the invariant state determining the spectral gap in the $L_2$ space of the invariant state.

Taking advantage of the structure of the GKLS generator we are able to compute explicitly the spectral gap for low temperatures,
when quantum features of the dynamics are more significant, if $r+q\geq 2$ (Theorem~\ref{thm:sp-gap-all}).
We also provide evidence (see Remark~\ref{rem:j=0&k=1!}) that the spectral gap is strictly positive but it is not possible
to obtain a simple closed-form expression for high temperatures.

The case where a parameter $r$, $q$ is strictly bigger than $1$ is the most difficult when considering deformations of
the CCR (see~\cite{Bo}) not only because of unboundedness of creation and annihilation operators, as in the boson case, but also
because of additional pathologies that arise. It is well-known, for instance, that field operators are not essentially self-adjoint
on the domain of finite particle vectors. However, our results complement those obtained for special cases $q=r=1$ (Boson),
$r=1$, $q=-1$ (Fermi) and $r=1$, $q=0$ scattered in the literature.

In addition the computation of the spectral gap of a GKLS generator has its own interest because of applications
in the study of strong ergodicity of open quantum systems \cite{BeDaSa,DFY,MuAl20,MuAl22} and the explicit result is
known only in few cases.

The paper is organized as follows. In Section \ref{sect:Fibonacci_osc} we discuss the structure of the spectrum of
generalized Fibonacci oscillators in order to justify the emergence of conditions on parameters~$r$,~$q$ that will be
assumed in the paper. The deduction from the weak coupling limit of form generators of QMSs for generalized Fibonacci Hamiltonians
viewed as open quantum systems is illustrated in Section \ref{sect:QMS-WCLT} and the construction of QMSs from form
generators by the minimal semigroup method (see \cite[Section~3]{FF-Proy}) is presented in Section \ref{sect:gen-open-Fib}.
The spectral gap is computed in Section \ref{sect:sp-gap} in a simple explicit formula for small temperatures of the reservoir
and for parameters satisfying $-1<q \leq 1 < r$, $r+q\geq 2$ also providing evidence that an explicit formula in the
general case cannot be achieved.

\section{Fibonacci oscillators}\label{sect:Fibonacci_osc}

Let $q$, $r$ be two real numbers with $q\not=r$. $(q,r)$-integers are defined by
\begin{equation*}
\varepsilon_0=0,\qquad \varepsilon_1=1,\qquad \varepsilon_n=\frac{r^n - q^n}{r-q}\quad \text{for\ } n\ge 2,
\end{equation*}
where we can assume $r>q$ without loss of generality. In the case where $q\to 1^{-}$ and $r\to 1^{+}$ one
finds the natural numbers and, if $r=1$ the usual $q$-integers (also allowing $q>1$). It is worth noticing
that, in the special case where $r=\big(\sqrt{5}+1\big)/2$, $q=\big(1-\sqrt{5}\big)/2$, one finds the sequence of Fibonacci
numbers. For this reason $(\varepsilon_n)_{n\geq 0}$ is called generalized Fibonacci sequence.
Note that $\varepsilon_n\geq 0$ for all $n\geq 0$.

Let $\mathsf{h}=\Gamma(\mathbb{C})$ be the one-mode Fock space with canonical orthonormal basis $(e_n)_{n\ge 0}$.
The Fibonacci oscillator is the quantum system with Hamiltonian
\begin{equation}\label{eq:Fibonacci-H}
H_S = \sum_{n\geq 0} \varepsilon_n |e_n\rangle\langle e_n|,
\end{equation}
whose spectrum is the generalized Fibonacci sequence. Defining the $(q,r)$ annihilation and creation operators
\begin{gather}
\operatorname{Dom}(a) =
\bigg\{ u\in\mathsf{h} \,\bigg|\, \sum_{n\ge 0} \varepsilon_n |u_n|^2 <\infty \bigg\},
\qquad a e_n =\sqrt{\varepsilon_n} e_{n-1},
\label{eq:annihil}\\
\operatorname{Dom}\big(a^\dagger\big) =
\bigg\{ u\in\mathsf{h} \,\bigg|\, \sum_{n\ge 0} \varepsilon_{n+1} |u_n|^2 <\infty \bigg\},
\qquad
a^\dagger e_n = \sqrt{\varepsilon_{n+1}} e_{n+1},\nonumber 
\end{gather}
one can write $H_S=a^\dagger a$ on the domain $F$ of finite linear combinations of vectors of the
canonical orthonormal basis, also called finite particle vectors.

One immediately checks that $a$ and $a^\dagger$ are bounded operators if and only if $-1\leq q < r\leq 1$, they
are mutually adjoint and satisfy the commutation relations
\begin{equation}\label{eq:rq-comm}
a a^\dagger - r a^\dagger a = q^N, \qquad a a^\dagger - q a^\dagger a = r^N,
\end{equation}
where $N$ is the usual number operator defined by
\[
\operatorname{Dom}(N)=\bigg\{ u\in\mathsf{h} \,\bigg|\, \sum_{n\ge 0} n^2 |u_n|^2 <\infty \bigg\}, \qquad
N u = \sum_{n\ge 1} nu_n e_n.
\]
Paying attention to the operator domains these properties can be extended to the general case $q<r$.
In particular, for $r=1$, we have the $q$-commutation relations $a a^\dagger - q a^\dagger a= \unit$.
These commutation relations can be found also considering creations and annihilations on interacting
Fock spaces (see~\cite{AcLu,GeSk} and the references therein) but it is more convenient to consider
the usual one-mode Fock space for our analysis. Moreover, we would like to mention
that two parameter deformed commutation relations lead to remarkable combinatorial formulas (e.g., for
moments of field operators) as those of canonical commutation and anti-commutation relations (see~\cite{GoKu}).

Since we are interested in the generalized Fibonacci oscillator as an open quantum
system weakly coupled with a reservoir and the weak coupling crucially depends on ordering of the
spectrum of~$H_S$, throughout the paper assume that eigenvalues $\varepsilon_n$ of $H_S$ form
an increasing sequence. Clearly, this is not the case, for example, if $0<q<r<1$ because
$r^{n+1}-q^{n+1}< r^n-q^n$ for big $n$. Moreover, in order to exclude high oscillatory behaviours
also for reasons that will be clear in the next section, we are mostly interested in the case
$-1\leq q \leq 1 < r$ therefore this inequality will also be assumed throughout the paper.

Note that $\varepsilon_2\geq \varepsilon_1$ if and only if $r^2-q^2\geq r-q$, i.e., $r+q-1\geq 0$,
therefore we need at least this additional condition. Once it holds, the sequence $(\varepsilon_n)_{n\geq 0}$ is
obviously increasing if $r>1>q\geq 0$ because
\[
r^{n+1}-q^{n+1}-(r^n-q^n)=r^n(r-1)+q^n(1-q)>0.
\]
The case $-1\leq q < 0 < 1 < r$ needs a slightly more detailed analysis. First of all note that, since $r$,~$q$
solve the equation $0=(x-r)(x-q)=x^2 - (r+q) x +rq$ and $\varepsilon_0=0,\varepsilon_1=1$ then
$\varepsilon_{n+2} -(r+q)\varepsilon_{n+1} + rq \varepsilon_{n}=0$.
 Therefore, since $r+q-1\geq 0 $ and $-rq>0$, the identity
\[
\varepsilon_{n+2}-\varepsilon_{n+1} = (r+q-1)\varepsilon_{n+1}-rq\,\varepsilon_n
\]
shows that the sequence $(\varepsilon_n)_{n\geq 0}$ is non decreasing whenever $r+q-1\geq 0$.

The above discussion is summarized by the following
\begin{Lemma}\label{lem:en-nondecr}
Assume $-1\leq q \leq 1 < r$.
The sequence $(\varepsilon_n)_{n\geq 0}$ is non-decreasing $($resp.\ strictly increasing$)$
if and only if $r+q-1\geq 0$ $($resp.\ $r+q-1 > 0)$.
\end{Lemma}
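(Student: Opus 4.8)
The plan is to prove both implications by extracting all the decisive information from the single difference $\varepsilon_2-\varepsilon_1$ and from the linear recurrence satisfied by the sequence. First I would record the elementary computation $\varepsilon_2=\big(r^2-q^2\big)/(r-q)=r+q$, so that $\varepsilon_2-\varepsilon_1=r+q-1$. This settles \emph{necessity} at once: if $(\varepsilon_n)_{n\ge 0}$ is non-decreasing (resp.\ strictly increasing), comparing its first two nonzero terms forces $r+q-1\ge 0$ (resp.\ $r+q-1>0$). The substance of the lemma therefore lies in the converse.

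For \emph{sufficiency} I would first note that, under the standing hypothesis $-1\le q\le 1<r$, one has $r^n>|q|^n\ge q^n$, so that $\varepsilon_n>0$ for every $n\ge 1$; this positivity is what drives the argument. I would then split according to the sign of $q$. When $0\le q\le 1$ the difference can be treated directly: writing
\[
\varepsilon_{n+1}-\varepsilon_n=\frac{r^n(r-1)+q^n(1-q)}{r-q},
\]
both terms in the numerator are non-negative because $r>1$, $q\ge 0$ and $q\le 1$, and the first is strictly positive, so the sequence is strictly increasing; note that here $r+q-1>r-1>0$ automatically, so this is consistent with the claimed equivalence. In this regime the sign of each difference is manifest and no further device is needed.

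The genuinely delicate case is $-1\le q<0<1<r$, and this is where I expect the main obstacle to lie: the explicit difference formula above is useless because $q^n$ oscillates in sign, making $q^n(1-q)$ negative for odd $n$, so one cannot read off the sign termwise. To bypass this I would pass to the recurrence. Since $r$ and $q$ are the roots of $x^2-(r+q)x+rq=0$ and $\varepsilon_0=0$, $\varepsilon_1=1$, the sequence obeys $\varepsilon_{n+2}-(r+q)\varepsilon_{n+1}+rq\,\varepsilon_n=0$, which I rearrange into
\[
\varepsilon_{n+2}-\varepsilon_{n+1}=(r+q-1)\,\varepsilon_{n+1}-rq\,\varepsilon_n .
\]
Now $r+q-1\ge 0$ together with $-rq>0$ (valid exactly because $q<0<r$) exhibits the right-hand side as a sum of two non-negative terms, using $\varepsilon_{n+1}>0$ and $\varepsilon_n\ge 0$ from the positivity already noted. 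Hence every difference $\varepsilon_{n+2}-\varepsilon_{n+1}$ is non-negative, which with $\varepsilon_1-\varepsilon_0=1>0$ and $\varepsilon_2-\varepsilon_1=r+q-1\ge 0$ gives monotonicity. If moreover $r+q-1>0$ then $(r+q-1)\varepsilon_{n+1}>0$, forcing strict monotonicity. Combining the two regimes yields the converse, and hence the stated equivalence.
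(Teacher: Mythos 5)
Your proposal is correct and follows essentially the same route as the paper: necessity from $\varepsilon_2-\varepsilon_1=r+q-1$, then sufficiency split according to the sign of $q$, using the direct difference formula $\varepsilon_{n+1}-\varepsilon_n=\big(r^n(r-1)+q^n(1-q)\big)/(r-q)$ when $q\ge 0$ and the recurrence identity $\varepsilon_{n+2}-\varepsilon_{n+1}=(r+q-1)\varepsilon_{n+1}-rq\,\varepsilon_n$ when $q<0$. Your write-up is in fact slightly more complete, since you make the positivity $\varepsilon_n>0$ explicit and spell out the strictly increasing case, which the paper leaves implicit.
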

\begin{Remark} The ``free'' $r=1$, $q=0$ and Fibonacci $r=\big(\sqrt{5}+1\big)/2$, $q=\big(1-\sqrt{5}\big)/2$ cases lie
on the boundary of the region. Note that the $q=r=1$ (Bose) and $q=1=-r$ (Fermi) cases, are formally excluded,
but can arise as limiting cases. However, the spectrum of system Hamiltonian is no more generic and
the study of QMS arising from the weak coupling limit has been carried on separately~\cite{CiFaLi}.
\end{Remark}

\section{QMS of weak coupling limit type}\label{sect:QMS-WCLT}

Let $H_S$ be a Hamiltonian with spectral decomposition
\begin{equation*}
H_S=\sum_{m\geq 0}
\varepsilon_{m} P_{\varepsilon_m},
\end{equation*}
where $\varepsilon_m$, with $\varepsilon_m<\varepsilon_n$ for $m<n$, are the eigenvalues
of $H_S$ and $P_{\varepsilon_m}$ are the corresponding eigenspaces.
QMSs of \textit{weak coupling limit type} (WCLT), associated with the Hamiltonian~$H_S$
(see \cite{AcLuVo,FQ} and the references therein), have generators of the form
${\mathcal L} = \sum_{\omega\in B_{+}} {\mathcal L}_{\omega}$ where $B_{+}$ is the set of all Bohr frequencies (Arveson spectrum)
\begin{equation*}
B_{+} :=\{\varepsilon_{n} - \varepsilon_{m} \colon \varepsilon_{n} - \varepsilon_{m} >0 \}.
\end{equation*}
Given a system operator $D$, whose domain contains ranges of projections $P_{\varepsilon_m}$,
depending on the interaction of the system with a reservoir, for every Bohr frequency $\omega$,
consider a generator ${\mathcal L}_{\omega}$ with the
Gorini--Kossakowski--Lindblad--Sudarshan (GKLS) structure
\begin{gather}
{\mathcal L}_{\omega}(x) = \mathrm{i} [H_\omega, x] -\frac{\Gamma^{-}_{\omega}}{2}
\left( D_{\omega}^{*}D_{\omega}x -2D_{\omega}^{*}xD_{\omega} + xD_{\omega}^*D_{\omega}\right)
\nonumber \\
 \hphantom{{\mathcal L}_{\omega}(x) =}{}
 - \frac{\Gamma^{+}_{\omega}}{2}
\left( D_{\omega}D_{\omega}^*x -2D_{\omega}xD_{\omega}^{*} + xD_{\omega}D_{\omega}^{*}\right)\label{eq:GKLS-WCLT}
\end{gather}
for all $x\in{\mathcal B}(\mathsf{h})$, with Kraus operators $D_{\omega}$ defined by
\begin{equation*}
D_{\omega}=
\sum_{(\varepsilon_{n}, \varepsilon_{m})\in B_{+,\omega}}P_{\varepsilon_{m}} D P_{\varepsilon_{n}},
\end{equation*}
where $B_{+,\omega}= \{(\varepsilon_{n} , \varepsilon_{m}) \,|\,
\varepsilon_{n} - \varepsilon_{m} =\omega \}$,
$\Gamma^{-}_{\omega}$, $\Gamma^{+}_{\omega}$ are non-negative
real constants with $\Gamma^{-}_{\omega}+\Gamma^{+}_{\omega}>0$
and $H_\omega$ is a bounded self-adjoint operator on~$\mathsf{h} $ commuting with~$H_S$.

In the case when the set of Bohr frequencies is infinite, for $\mathcal{L}$ to be the
generator of a norm continuous QMS the series
\begin{equation*}
\sum_{\omega\in B_{+}}\left( \Gamma^{-}_{\omega}D_\omega^*D_\omega
+\Gamma^{+}_{\omega} D_\omega D_\omega^*\right)
\end{equation*}
must be strongly convergent in ${\mathcal B}(\mathsf{h})$, the von Neumann algebra of all bounded operators
on~$\mathsf{h}$ (see \cite[Corollary~30.13, p.~268 and Theorem~30.16, p.~271]{KRP}).

QMSs generators in this form arise in the weak coupling limit of a system with Hamiltonian~$H_S$ coupled to a
Boson reservoir in equilibrium at inverse temperature $\beta$ with coupling
\[
D\otimes A^\dagger(\phi) + D^*\otimes A(\phi),
\]
where $A^\dagger(\phi), A(\phi)$ are the creation and annihilation operator of the reservoir with
test function~$\phi$. Constants $\Gamma^\pm_{\omega}=f_{\omega}\gamma^\pm_{\omega}$ are given explicitly by
\begin{equation}\label{eq:Gamma-omega}
\gamma^{-}_{\omega} = \frac{{\rm e}^{\beta\omega}}{{\rm e}^{\beta\omega}-1}, \qquad
\gamma^{+}_{\omega} = \frac{1}{{\rm e}^{\beta\omega}-1}, \qquad
f_{\omega} = \int_{\{y\in\mathbb{R}^3\,|\, |y|=\omega \}} |\phi(y)|^2 {\rm d}_s y,
\end{equation}
where ${\rm d}_s$ denotes the surface integral and the cut-off $\phi$ is a square-integrable function on~$\mathbb{R}^3$.
A~realistic cut-off could be a function which is constant in some bounded ``big'' region and slowly vanishes as~$\omega$ goes to infinity. For this reason, slightly modifying generators $\mathcal{L}_\omega$ after the weak coupling limit,
if necessary, it looks reasonable to assume throughout the paper~$f_\omega$ constant and fix~$f_\omega=1$.

From the above discussion it is clear that the spectral decomposition of the Hamiltonian $H_S$ plays a key role.
In particular, if we consider the case $r=1$, $q=0$ (in which (\ref{eq:rq-comm}) are the well-known free commutation relations)
the Hamiltonian $H_S$ becomes
\[
H_S = a^\dagger a = \sum_{n\geq 1} |e_n\rangle\langle e_n| = P_1
\]
with $P_1$ projection. Therefore there is only the Bohr frequency $\omega=1$,
\[
D_1 = |e_0\rangle\langle e_0| D P_1 = \left|e_0\right\rangle\left\langle P_1 D^*e_0\right|
\]
is determined by the vector $v=P_1 D^*e_0$ orthogonal to $e_0$ so that
\[
D_1=\left|e_0\right\rangle\left\langle v\right|, \qquad D_1^*=\left|v\right\rangle\left\langle e_0\right|,
\qquad D_1^*D_1 = \left|v\right\rangle\left\langle v\right|, \qquad
D_1 D_1^*=\Vert v\Vert^2 \left|e_0\right\rangle\left\langle e_0\right|
\]
and $H_1$ is a multiple of $P_1$ up to addition of a multiple of the identity operator.
In this way, calling $e$ the normalized vector $v$, we get the GKLS generator
\begin{gather*}
{\mathcal L}_{\omega}(x)
 = \mathrm{i}\kappa [\,|e\rangle\langle e|, x] -\frac{\Gamma^{-}_{1}\Vert v\Vert^2}{2}
\left( \left|e\right\rangle\left\langle e\right| x
-2 \left|e\right\rangle\left\langle e_0\right| x \left|e_0\right\rangle\left\langle e\right|
+ x \left|e\right\rangle\left\langle e\right|\right)
\nonumber \\
 \hphantom{{\mathcal L}_{\omega}(x)=}{}
 - \frac{\Gamma^{+}_{1}\Vert v\Vert^2}{2}
\left( \left|e_0\right\rangle\left\langle e_0\right|x
-2 \left|e_0\right\rangle\left\langle e\right| x \left|e \right\rangle\left\langle e_0\right|
+ x \left|e_0\right\rangle\left\langle e_0\right|\right).
\end{gather*}
This GKLS generator essentially describes the dynamics of a $2$-level system than can be
explicitly computed. One finds the same dramatic simplification for $r=1$, $q=-1$ and so we are not interested
in these special cases.

It is worth noticing that the GKLS generator written down just by analogy with the Boson case,
namely
\[
{\mathcal L}(x)
= \mathrm{i}\kappa [\,a^\dagger a, x] -\frac{\Gamma^{-}_1}{2}
\big( a a^\dagger x - 2 a x a^\dagger + x a a^\dagger x\big)
- \frac{\Gamma^{+}_1}{2} \big(a^\dagger a x -2 a^\dagger x a + x a^\dagger a\big)
\]
has another structure.

It is well-known that, when the system Hamiltonian $H_S$ is \emph{generic} namely:
\begin{itemize}\itemsep=0pt
\item[(i)] Its spectrum is pure point and each eigenvector has multiplicity one,
\item[(ii)] For all $\omega\in B_{+}$ there exists a unique pair $(n,m)$ of energy levels
such that $\varepsilon_n-\varepsilon_m=\omega$,
\end{itemize}
the structure of the generator is very simple (see, e.g.,~\cite{FV}) because operators
$D_\omega$ are multiples of rank one operators $|e_m\rangle\langle e_n|$, where $n$, $m$
are determined by the unique pair of such that $\varepsilon_n-\varepsilon_m=\omega$. In particular each off-diagonal
rank one operator $|e_i\rangle\langle e_j|$ is an eigenvector for~$\mathcal{L}$ and the
action of maps $\mathcal{T}_t$ of the QMS $(\mathcal{T}_t)_{t\geq 0}$ generated by~$\mathcal{L}$
(see Section~\ref{sect:gen-open-Fib} for details) on $|e_i\rangle\langle e_j|$ is explicit.

The Fibonacci type Hamiltonian $H_S$ as in~(\ref{eq:Fibonacci-H}) is clearly generic for almost
all choices of parameters~$r$,~$q$.
However, in other cases, the WCLT generator might be more complex because of the structure of~$B_{+}$
(see~\cite{DFSU} for a detailed analysis of the structure of norm-continuous QMSs).
Indeed, if $(\varepsilon_n)_{n\geq 0}$ is the Fibonacci sequence, then $\varepsilon_0=0$, $\varepsilon_1=1$
and $\varepsilon_{n+1}=\varepsilon_n+\varepsilon_{n-1}$ for $n\geq 1$ so that
$B_{+}= \{ \varepsilon_n\,|\, n\geq 1 \}$ because, for all $n\geq 1$
\[
\varepsilon_{n+1}-\varepsilon_{n}= \varepsilon_{n}-\varepsilon_{n-2} = \varepsilon_{n-1}
\]
and, for $k\geq 3$,
\[
\varepsilon_{n+k}-\varepsilon_{n} = \varepsilon_{n+k-1}+\varepsilon_{n+k-2}-\varepsilon_{n}
 = 2\varepsilon_{n+k-2}+\varepsilon_{n+k-3}-\varepsilon_{n}
 \geq 2\varepsilon_{n+k-2} > \varepsilon_{n'}-\varepsilon_{m'}
\]
for all $n'<n+k-2$ and so, in particular, for all $n'< n$. In this case, as a consequence,
\begin{gather*}
D_{\varepsilon_1}=D_1 = \langle e_0,D e_1\rangle |e_0\rangle\langle e_1|
+ \langle e_0,D e_2\rangle |e_0\rangle\langle e_2| \\
\hphantom{D_{\varepsilon_1}=D_1 =}{} + \langle e_1,D e_3\rangle |e_1\rangle\langle e_3|
+\langle e_2,D e_3\rangle |e_2\rangle\langle e_3|
+ \langle e_3,D e_4\rangle |e_3\rangle\langle e_4|.
\end{gather*}
Clearly, the operator $D$ also plays a key role in the GKLS generator~$\mathcal{L}$ because transitions between levels
$\varepsilon_n$ and $\varepsilon_m$ $(\varepsilon_n-\varepsilon_m=\omega>0$) can be forbidden if $\langle e_m, D e_n\rangle$
is zero even if $\Gamma^{\pm}_\omega$ is strictly positive.
The most natural choice for $D$ is the annihilator $D=a$ defined by~(\ref{eq:annihil}). With this choice
of $D$ and the Fibonacci sequence as $(\varepsilon_n)_{n\geq 0}$ we find
\[
D_{\varepsilon_1} = |e_0\rangle\langle e_1| + \sqrt{2} |e_2\rangle\langle e_3| + \sqrt{3} |e_3\rangle\langle e_4|.
\]
However, with other choices of the operator $D$, Kraus operators $D_\omega$ can be rank one also
in the case where $(\varepsilon_n)_{n\geq 0}$ is the Fibonacci sequence.

\section{Generic open Fibonacci type oscillators}\label{sect:gen-open-Fib}

From now on we consider the GKLS form generator
\begin{equation}\label{eq:L-Bohr}
\mathcal{L}=\sum_{n\geq 1}\mathcal{L}_n,
\end{equation}
where
\begin{gather*}
 \mathcal{L}_n(x) = \mathrm{i} \left[ \kappa^{-}_n |e_n\rangle\langle e_n|
 +\kappa^{+}_n |e_{n-1}\rangle\langle e_{n-1}|,x \right] \\
\hphantom{\mathcal{L}_n(x) =}{} -
 \frac{\Gamma^{-}_n}{2}\frac{r^n-q^n}{r-q}
 \left( |e_n\rangle\langle e_n| x - 2|e_n\rangle\langle e_{n-1}| x |e_{n-1}\rangle\langle e_n|
 + x|e_{n}\rangle\langle e_n|\right) \nonumber \\
\hphantom{\mathcal{L}_n(x) =}{}-
 \frac{\Gamma^{+}_n}{2}\frac{r^n-q^n}{r-q}
 \left( |e_{n-1}\rangle\langle e_{n-1}| x
 - 2|e_{n-1}\rangle\langle e_n| x |e_n\rangle\langle e_{n-1}|
 + x|e_{n-1}\rangle\langle e_{n-1}|\right),
\end{gather*}
$\Gamma^{\pm}_n$ are as in (\ref{eq:Gamma-omega}) and $\kappa^{\pm}_n$ are real constants.
As explained in the previous section, these are generators~(\ref{eq:GKLS-WCLT}) where
all operators $D_\omega$ are rank-one because either $H_S$ is generic or by suitable choice
of the operator~$D$. The set of $\omega$ with terms contributing to the generator is in one-to-one correspondence
with $\mathbb{N}^*=\{1,2,3,\dots\}$ and transitions from level $\varepsilon_n$ ($n>0$) can occur only to levels
$\varepsilon_{n+1}$ and $\varepsilon_{n-1}$ for $n>0$ and from level $\varepsilon_0=0$ to $\varepsilon_1=1$ so that
the graph of the process is as follows:
\begin{figure}[h!]\centering
$$
\xymatrix{*++[o][F]{\varepsilon_0=0} \ar@(ur,ul)[rr]
&& *++[o][F]{\varepsilon_1=1} \ar@(ur,ul)[rr] \ar@(dl,dr)[ll]
&& *++[o][F]{\varepsilon_2} \ar@(dl,dr)[ll] \ar@(ur,ul)[rr]
&& *++[o][F]{\varepsilon_3} \ar@(dl,dr)[ll] \ar@(ur,ul)[rr]
&& *++[ ][F]{\dots} \ar@(dl,dr)[ll] }
$$
{\small Graph of the nearest neighbour jump process.}
\end{figure}

The definition of $\mathcal{L}$ is only formal because the sum on $n$ in~(\ref{eq:L-Bohr})
is infinite, therefore some clarifications are now in order. First of all note that the operator
\[
G = \sum_{n\geq 0}\left(\mathrm{i}(\kappa^{-}_n+\kappa^{+}_{n+1})
-\frac{\Gamma^{-}_n}{2}\frac{r^n-q^n}{r-q}-\frac{\Gamma^{+}_{n+1}}{2}\frac{r^{n+1}-q^{n+1}}{r-q}\right)|e_n\rangle\langle e_n|
\]
is well defined as a normal operator on the domain Dom$(G)$ of vectors $u=\sum_n u_n e_n\in\mathsf{h}$,
i.e., such that $\sum_{n\geq 0}|u_n|^2 < \infty$ for which
\[
\sum_{n\geq 0} \left(\left| \kappa^{-}_n+\kappa^{+}_{n+1}\right|^2
+\left|\frac{\Gamma^{-}_n}{2}\frac{r^n-q^n}{r-q}
+\frac{\Gamma^{+}_{n+1}}{2}\frac{r^{n+1}-q^{n+1}}{r-q}\right|^2\right) |u_n|^2 < \infty.
\]
In particular, if sequences $(\kappa^{-}_n+\kappa^{+}_n)_{n\geq 1}$ and $((\Gamma^{-}_n+\Gamma^{+}_n)(r^n-q^n))_{n\ge 1}$
are bounded, then $G$ is bounded, $\mathcal{L}$ is a bounded operator on $\mathcal{B}(\mathsf{h})$ and generates a norm
continuous QMS on $\mathcal{B}(\mathsf{h})$ with Kraus operators $L_\ell$ $(\ell\geq 1)$ which are rank-one and given by
\begin{equation*}
L_{2\ell} = \left(\frac{\Gamma^{-}_\ell(r^\ell-q^\ell)}{r-q}\right)^{1/2} \kern-8truept |e_{\ell-1}\rangle\langle e_\ell|, \qquad
L_{2\ell+1} = \left(\frac{\Gamma^{+}_\ell(r^\ell-q^\ell)}{r-q}\right)^{1/2} \kern-8truept |e_\ell\rangle\langle e_{\ell-1}|.
\end{equation*}

However, even if $G$ is unbounded as in typical cases with $r>1$, it is possible to construct a~uniquely determined QMS
on $\mathcal{B}(\mathsf{h})$ by the minimal semigroup method (see, e.g., \cite[Sections~3.3 and~3.4]{FF-Proy} and also~\cite{KuSiSr} and the references therein). Indeed, $G$~generates a strongly continuous semigroup $(P_t)_{t\geq 0}$ on $\mathsf{h}$ and the explicit form of the operators~$P_t$
is immediately written. For $x\in\mathcal{B}(\mathsf{h})$ let $\pounds(x)$ be the quadratic form with domain
$\operatorname{Dom}(G)\times\operatorname{Dom}(G)$
\begin{equation}\label{eq:Lform}
\pounds(x)[v,u] = \left\langle Gv, x u \right\rangle
+\sum_{\ell\geq 1}\left\langle L_\ell v, x L_\ell u \right\rangle
+ \left\langle v, x G u \right\rangle.
\end{equation}

The {\em minimal} semigroup associated with operators $G$, $L_\ell$ is constructed, on elements $x$ of ${\mathcal{B}}(\mathsf{h})$,
by means of the non decreasing sequence of positive maps $({\mathcal{T}}^{(n)}_t)_{n\ge 0}$ defined, by recurrence, as follows
\begin{gather}
{\mathcal{T}}^{(0)}_t(x) = P_t^* xP_t, \label{eq:min-semigr-induct}\\
\big\langle v, {\mathcal{T}}^{(n+1)}_t(x)u \big\rangle
 = \langle P_t v, x P_tu \rangle
+\sum_{\ell\geq 1}\int_0^t
\big\langle L_\ell P_{t-s}v,
{\mathcal{T}}^{(n)}_s(x)L_\ell P_{t-s}u \big\rangle \, {\rm d}s \nonumber
\end{gather}
for all $x\in{\mathcal{B}}(\mathsf{h})$, $t\ge 0$, $v,u\in\operatorname{Dom}(G)$.
Indeed, we have
\[
{\mathcal{T}}_t(x)=\sup_{n\ge 0}{\mathcal{T}}^{(n)}_t(x)
\]
for all positive $x\in{\mathcal{B}}(\mathsf{h})$ and all $t\ge 0$. The definition of
positive maps ${\mathcal{T}}_t$ is then extended to all the elements of
${\mathcal{B}}(\mathsf{h})$ by linearity.
The minimal semigroup associated with $G$, $L_\ell$ satisfies the integral equation
\begin{equation} \label{eq-semigr-int-equat}
\langle v,{\mathcal{T}}_t(x) u\rangle = \langle v, x u\rangle + \int_0^t \pounds(\mathcal{T}_s(x))[v,u]\, {\rm d}s
\end{equation}
for all $x\in{\mathcal{B}}(\mathsf{h})$, $t\ge 0$, $v,u\in\operatorname{Dom}(G)$.
Moreover, it is the unique solution to the above equation if and only if it is conservative (or Markov), i.e.,
${\mathcal{T}}_t(\unit)=\unit$ for all $t\ge 0$ (see, e.g., \cite[Corollary~3.23]{FF-Proy}).

In our framework it is not difficult to show that conservativity is equivalent to the Karlin--McGregor condition
for non-explosion of Markov jump processes. Indeed, one immediately checks that the diagonal algebra generated by
projections $|e_n\rangle\langle e_n|$ is invariant for maps $\mathcal{T}_t^{(n)}$ (for all $n\geq 0$) defined recursively by
(\ref{eq:min-semigr-induct}) because each vector $e_n$ is an eigenvector of $G$ so that
$P^*_t |e_j\rangle\langle e_j|P_t = {\rm e}^{z_j t}|e_j\rangle\langle e_j|$ for some $z_j\in\mathbb{C}$ and, looking
at iterations (\ref{eq:min-semigr-induct}), if $\mathcal{T}^{(n)}_s(|e_j\rangle\langle e_j|)$ belongs to the diagonal
algebra, then
\[
P_{t-s}^*L_\ell^*\mathcal{T}^{(n)}_s(|e_j\rangle\langle e_j|)L_\ell P_{t-s}
\]
belongs to the diagonal algebra as well for all $0\leq s\leq t$, and so also $\mathcal{T}^{(n+1)}_s(|e_j\rangle\langle e_j|)$
belongs to the diagonal algebra. It follows that $\mathcal{T}_s(|e_j\rangle\langle e_j|)$ belongs to the diagonal
algebra which is invariant for the QMS~$\mathcal{T}$, as expected from the quadratic form computation
\[
\pounds(f(a^\dagger a))
= \sum_{n\geq 1} \Gamma^{-}_n \epsilon_n \left( f(\epsilon_{n-1})-f(\epsilon_n) \right) |e_n\rangle\langle e_n|
+ \sum_{n\geq 0} \Gamma^{+}_{n+1} \epsilon_{n+1} \left( f(\epsilon_{n+1})-f(\epsilon_n) \right) |e_n\rangle\langle e_n|
\]
for all bounded function $f$ on the spectrum of $H_S$.
In this way, we see that the restriction of maps~$\mathcal{T}_t$ to the diagonal algebra coincides with
the minimal semigroup of the classical birth-and-death process with birth (resp.\ death) rates $\lambda_n$ (resp. $\mu_n$)
\begin{equation}\label{eq:lambda-mu}
\lambda_n = \Gamma^{+}_{n+1} \frac{r^{n+1}-q^{n+1}}{r-q} =\Gamma^{+}_{n+1}\varepsilon_{n+1},
\qquad \mu_n = \Gamma^{-}_n \frac{r^n-q^n}{r-q}=\Gamma^{-}_n\varepsilon_n.
\end{equation}
Moreover, writing (\ref{eq-semigr-int-equat}) for $x=|e_j\rangle\langle e_j|$, $u=v=e_i$ and recalling that
$\langle e_i, \mathcal{T}_t(|e_j\rangle\langle e_j|) e_i\rangle$ is the probability of visiting $j$ at time $t$
starting from~$i$ at time~$0$, we find the backward Kolmogorov equations of the birth-and-death process.
Therefore the minimal semigroup is Markov if and only if the minimal semigroup of
the classical birth-and-death process with the above rates is conservative.

Let $\pi_0=1$ and, for $n\geq 1$,
\begin{equation*}
\pi_n = \left(\lambda_0\lambda_1\cdots\lambda_{n-1}\right)/\left(\mu_1\mu_2\cdots\mu_n\right) ={\rm e}^{-\beta\varepsilon_n}.
\end{equation*}
It is well-known \cite[Theorem 2.2, p.~100]{An} that the minimal semigroup of the classical birth-and-death process
with these rates is conservative (more precisely, the minimal semigroup is identity preserving and it is the unique solution
of the backward Kolmogorov equations) if and only if
\begin{equation*} 
\sum_{n\geq 1}\frac{1}{\lambda_n\pi_n}\sum_{k=1}^{n} \pi_k = + \infty.
\end{equation*}

\begin{Theorem}\label{th:minQMSunital}The minimal semigroup associated with the above $G$, $L_\ell$ is Markov for all $-1\leq q\leq 1< r$
with $r+q\geq 1$.
\end{Theorem}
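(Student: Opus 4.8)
The plan is to invoke the reduction established just above the statement: the minimal semigroup is Markov if and only if the classical birth-and-death process with rates $\lambda_n=\Gamma^{+}_{n+1}\varepsilon_{n+1}$, $\mu_n=\Gamma^{-}_n\varepsilon_n$ is non-explosive, and by the Karlin--McGregor criterion (\cite[Theorem~2.2]{An}) this holds if and only if
\begin{equation*}
\sum_{n\geq 1}\frac{1}{\lambda_n\pi_n}\sum_{k=1}^{n}\pi_k=+\infty,
\qquad \pi_n={\rm e}^{-\beta\varepsilon_n}.
\end{equation*}
So the entire proof collapses to a growth estimate on the explicitly known $\varepsilon_n$ and $\Gamma^{\pm}_n$, with no further operator-theoretic work needed since the unboundedness of $G$ has already been absorbed into the minimal-semigroup reduction.

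First I would simplify the general term. Writing $\omega_{n+1}=\varepsilon_{n+1}-\varepsilon_n$ for the Bohr frequency of the $(n,n+1)$ transition and using $\Gamma^{+}_{n+1}=\big({\rm e}^{\beta\omega_{n+1}}-1\big)^{-1}$ from~(\ref{eq:Gamma-omega}),
\begin{equation*}
\frac{1}{\lambda_n\pi_n}
=\frac{\big({\rm e}^{\beta\omega_{n+1}}-1\big){\rm e}^{\beta\varepsilon_n}}{\varepsilon_{n+1}}
=\frac{{\rm e}^{\beta\varepsilon_{n+1}}-{\rm e}^{\beta\varepsilon_n}}{\varepsilon_{n+1}},
\end{equation*}
the last equality using $\omega_{n+1}+\varepsilon_n=\varepsilon_{n+1}$. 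Since $\sum_{k=1}^{n}\pi_k\geq\pi_1={\rm e}^{-\beta}>0$ for all $n\geq 1$, the general term of the Karlin--McGregor series satisfies
\begin{equation*}
\frac{1}{\lambda_n\pi_n}\sum_{k=1}^{n}\pi_k
\geq {\rm e}^{-\beta}\,\frac{{\rm e}^{\beta\varepsilon_{n+1}}-{\rm e}^{\beta\varepsilon_n}}{\varepsilon_{n+1}}.
\end{equation*}

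It then remains to show the right-hand side tends to $+\infty$. Because $r>1$ and $|q|\leq 1$, the closed form $\varepsilon_n=(r^n-q^n)/(r-q)$ gives $\varepsilon_n\sim r^n/(r-q)\to\infty$ and $\omega_{n+1}\to\infty$, so for large $n$ we have ${\rm e}^{-\beta\omega_{n+1}}\leq 1/2$ and
\begin{equation*}
\frac{{\rm e}^{\beta\varepsilon_{n+1}}-{\rm e}^{\beta\varepsilon_n}}{\varepsilon_{n+1}}
=\frac{{\rm e}^{\beta\varepsilon_{n+1}}\big(1-{\rm e}^{-\beta\omega_{n+1}}\big)}{\varepsilon_{n+1}}
\geq \frac{1}{2}\,\frac{{\rm e}^{\beta\varepsilon_{n+1}}}{\varepsilon_{n+1}}\to+\infty,
\end{equation*}
since ${\rm e}^{x}/x\to\infty$ and $\varepsilon_{n+1}\to\infty$. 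Thus the general term does not even tend to $0$, the series diverges, the process is non-explosive, and the minimal semigroup is Markov. The intuition is that $r>1$ forces the death rates $\mu_n=\varepsilon_n/\big(1-{\rm e}^{-\beta\omega_n}\big)\to\infty$ and the birth rates $\lambda_n=\varepsilon_{n+1}/\big({\rm e}^{\beta\omega_{n+1}}-1\big)\to 0$, so the chain is pulled towards the ground state and cannot escape; conservativity is therefore \emph{easier} here than in the Boson case $r=1$, treated separately in~\cite{CiFaLi}.

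The only delicate point is the boundary $r+q=1$ (which contains the Fibonacci parameters), where Lemma~\ref{lem:en-nondecr} yields only non-strict monotonicity: one checks $\varepsilon_1=\varepsilon_2$, so $\omega_2=0$ and $\Gamma^{\pm}_2$ blow up, reflecting the fact that the spectrum is no longer generic and the nearest-neighbour form of $G$, $L_\ell$ is not literal there. This is harmless for the argument, because in the simplified identity that single term has $\varepsilon_{n+1}=\varepsilon_n$ and hence contributes $0$ to the series, while $\pi_n={\rm e}^{-\beta\varepsilon_n}$ remains well defined as the cancelling infinities in $\lambda_1$ and $\mu_2$ leave their ratio finite. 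Divergence is produced entirely by the tail, so the conclusion persists across the whole range $-1\leq q\leq 1<r$ with $r+q\geq 1$. I expect this boundary bookkeeping to be the main thing to handle carefully; the core estimate itself is routine once the super-exponential growth of ${\rm e}^{\beta\varepsilon_n}$ is compared with the merely exponential growth of $\varepsilon_{n+1}$.
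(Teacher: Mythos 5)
Your proposal follows essentially the same route as the paper's own proof: reduction to the Karlin--McGregor non-explosion criterion for the associated birth-and-death process, the identity $1/(\lambda_n\pi_n)=\big(\mathrm{e}^{\beta\varepsilon_{n+1}}-\mathrm{e}^{\beta\varepsilon_n}\big)/\varepsilon_{n+1}$, the lower bound $\sum_{k=1}^n\pi_k\geq\pi_1$, and divergence of the resulting series because its general term does not tend to zero. The only differences are that you spell out the verification that $1/(\lambda_n\pi_n)\to+\infty$ (via $\omega_{n+1}\to\infty$ and $\mathrm{e}^{\beta\varepsilon_{n+1}}/\varepsilon_{n+1}\to\infty$), which the paper merely asserts, and you flag the degenerate boundary $r+q=1$ (where $\omega_2=0$), which the paper passes over silently; both additions are consistent with, not departures from, the paper's argument.
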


\begin{proof} Notice that $\omega_n=\epsilon_n-\epsilon_{n-1}$ and so,
by~(\ref{eq:Gamma-omega}), we have
\[
\Gamma^{+}_n = \frac{1}{{\rm e}^{\beta\omega_n}-1}, \qquad
\Gamma^{-}_n = \frac{{\rm e}^{\beta\omega_n}}{{\rm e}^{\beta\omega_n}-1}.
\]
Thus we get $\pi_n ={\rm e}^{-\beta\varepsilon_n}$ for all $n\geq 0$ and
\[
\frac{1}{\lambda_n\pi_n} = \frac{{\rm e}^{\beta\varepsilon_{n+1}}-{\rm e}^{\beta\varepsilon_n}}{\varepsilon_{n+1}}
=\frac{{\rm e}^{\beta\varepsilon_n}}{\varepsilon_{n+1}}\big({\rm e}^{\beta(r^n(r-1)+q^n(1-q))/(r-q)}-1\big).
\]
Since the sequence $(1/\lambda_n\pi_n)_{n\geq 0}$ diverges as $n$ goes to $+\infty$ and
\[
\sum_{n=1}^\infty \frac{1}{\lambda_n\pi_n}\sum_{k=1}^n\pi_k
> \pi_1 \sum_{n=1}^\infty \frac{1}{\lambda_n\pi_n} = + \infty.
\]
This completes the proof.
\end{proof}

In the sequel we will assume that parameters $-1\leq q\leq 1< r$ satisfy $r+q\geq 1$.
Theorem~\ref{th:minQMSunital} also implies that \cite[Proposition~3.33]{FF-Proy} the domain of the generator
of~$\mathcal{T}$ is the space of $x\in\mathcal{B}(\mathsf{h})$ for which the quadratic form~$\pounds(x)$~(\ref{eq:Lform}) with domain $\operatorname{Dom}(G)\times\operatorname{Dom}(G)$ is bounded. This happens, in particular, for
all off diagonal rank-one operators $|e_j\rangle\langle e_k|$ $(j\not=k$)
\begin{gather}
\mathcal{L}(|e_j\rangle\langle e_k|) = \Bigg( \mathrm{i}(\kappa^{-}_j-\kappa^{-}_{k} +\kappa^{+}_{j+1}-\kappa^{+}_{k+1}) \nonumber\\
\hphantom{\mathcal{L}(|e_j\rangle\langle e_k|) =}{}
- \frac{\Gamma^{-}_j}{2}\varepsilon_j-\frac{\Gamma^{+}_{j+1}}{2}\varepsilon_{j+1}
-\frac{\Gamma^{-}_k}{2}\varepsilon_k-\frac{\Gamma^{+}_{k+1}}{2}\varepsilon_{k+1}\Bigg)|e_j\rangle\langle e_k|,\label{eq:L-off-diag}
\end{gather}
that are eigenvectors of $\mathcal{L}$ with nonzero eigenvalue. This remark allows us to prove in a simple way
existence and uniqueness of a normal invariant state.

\begin{Theorem}\label{th:unique-inv-st}
Suppose that $-1\leq q\leq 1< r$ and $r+q\geq 1$. The QMS admits a unique invariant state $\rho$
\begin{equation}\label{eq:inv-st}
\rho = \frac{1}{Z_\beta}\sum_{n\geq 0} {\rm e}^{-\beta\varepsilon_n} |e_n\rangle\langle e_n|, \qquad
Z_\beta = \sum_{n\geq 0} {\rm e}^{-\beta\varepsilon_n}.
\end{equation}
\end{Theorem}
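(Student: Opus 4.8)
The plan is to prove existence by exhibiting $\rho$ in (\ref{eq:inv-st}) as a well-defined state annihilated by the predual generator, and uniqueness by first showing that every normal invariant state must be diagonal and then reducing the diagonal part to the uniqueness of the stationary distribution of the associated birth-and-death chain. First I would check that $\rho$ is a genuine faithful state: since $r>1$ and $|q|\le 1$ the eigenvalues $\varepsilon_n=(r^n-q^n)/(r-q)$ grow like $r^n/(r-q)$, so $Z_\beta=\sum_{n\ge 0}{\rm e}^{-\beta\varepsilon_n}<\infty$ and $\rho$ is positive, trace-class, normalized, with strictly positive diagonal entries. For invariance I would use that, as recalled before the statement, the diagonal algebra is invariant under the QMS and the reduced dynamics there is the birth-and-death process with rates (\ref{eq:lambda-mu}); it therefore suffices to verify that $\pi_n={\rm e}^{-\beta\varepsilon_n}$ is stationary. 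The quickest route is the detailed balance identity $\lambda_n\pi_n=\mu_{n+1}\pi_{n+1}$. Writing $\lambda_n=\Gamma^{+}_{n+1}\varepsilon_{n+1}$, $\mu_{n+1}=\Gamma^{-}_{n+1}\varepsilon_{n+1}$ and using $\Gamma^{+}_{n+1}/\Gamma^{-}_{n+1}={\rm e}^{-\beta\omega_{n+1}}$ with $\omega_{n+1}=\varepsilon_{n+1}-\varepsilon_n$ from (\ref{eq:Gamma-omega}), the identity reduces to $\pi_{n+1}/\pi_n={\rm e}^{-\beta\omega_{n+1}}$, which is immediate; reversibility then yields $\mathcal{L}_*(\rho)=0$.

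For uniqueness, let $\sigma$ be any normal invariant state, so that $\operatorname{tr}(\sigma\,\mathcal{L}(x))=0$ for every $x$ in the domain of the generator. The decisive step is to show $\sigma$ is diagonal. For $j\ne k$ the operator $|e_j\rangle\langle e_k|$ belongs to the domain of the generator, which is exactly the consequence of Theorem~\ref{th:minQMSunital} together with \cite[Proposition~3.33]{FF-Proy} noted before the statement, and by (\ref{eq:L-off-diag}) it is an eigenvector of $\mathcal{L}$ whose eigenvalue has real part $-\tfrac12\big(\Gamma^{-}_j\varepsilon_j+\Gamma^{+}_{j+1}\varepsilon_{j+1}+\Gamma^{-}_k\varepsilon_k+\Gamma^{+}_{k+1}\varepsilon_{k+1}\big)$. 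Since $\Gamma^{+}_{j+1}\varepsilon_{j+1}=\lambda_j>0$, this eigenvalue is nonzero. Pairing the eigenvalue relation against $\sigma$ and using invariance gives $0=\operatorname{tr}\big(\sigma\,\mathcal{L}(|e_j\rangle\langle e_k|)\big)$, and as the eigenvalue is nonzero this forces $\langle e_k,\sigma e_j\rangle=0$; hence every off-diagonal entry of $\sigma$ vanishes and $\sigma$ is diagonal.

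The remaining step is classical. On the diagonal algebra $\sigma$ is an invariant probability for the birth-and-death chain with rates (\ref{eq:lambda-mu}). All these rates are strictly positive, so the chain is irreducible, and by Theorem~\ref{th:minQMSunital} the minimal chain is non-explosive; since $\pi$ is a summable invariant measure the chain is positive recurrent, so its stationary distribution is unique and equals $(\pi_n/Z_\beta)_{n\ge 0}$. Therefore $\sigma=\rho$. I expect the main obstacle to lie in the first uniqueness step: one must be certain that the off-diagonal rank-one operators genuinely belong to the domain of the (unbounded, when $r>1$) generator, so that the eigenvalue relation (\ref{eq:L-off-diag}) may legitimately be tested against $\sigma$. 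This is precisely where the conservativity established in Theorem~\ref{th:minQMSunital} is used, and it is what lets the delicate unbounded regime be handled on the same footing as the bounded one.
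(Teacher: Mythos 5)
Your proposal is correct and follows essentially the same route as the paper's proof: off-diagonal rank-one operators lie in the domain of $\mathcal{L}$ and are eigenvectors with nonzero eigenvalue, forcing any normal invariant state to be diagonal, after which both existence and uniqueness reduce to the associated classical birth-and-death process. The only cosmetic difference is that you verify stationarity of $\pi_n={\rm e}^{-\beta\varepsilon_n}$ by the detailed balance identity $\lambda_n\pi_n=\mu_{n+1}\pi_{n+1}$, where the paper instead cites Anderson's result on invariant densities, and you make explicit the role of conservativity (Theorem~\ref{th:minQMSunital}) that the paper leaves implicit.
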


\begin{proof}
First of all note that $\sum_{n>0}{\rm e}^{-\beta\varepsilon_n}<+\infty$.
Let $\rho$ be a normal invariant state. Since rank one operators $|e_j\rangle\langle e_k|$ $(j\not=k$)
belong to the domain of the generator $\mathcal{L}$ and are eigenvectors with nonzero eigenvalue $\xi_{jk}$, say, differentiating
the identity $\operatorname{tr}(\rho\mathcal{T}_t(|e_j\rangle\langle e_k|))=\operatorname{tr}(\rho |e_j\rangle\langle e_k|)$
at $t=0$, we get
\[
0=\operatorname{tr}\left(\rho\mathcal{L}(|e_j\rangle\langle e_k|)\right)= \xi_{jk}\langle e_k, \rho e_j\rangle
\]
and so $\rho$ is diagonal in the same basis as $H_S$, i.e., $\rho=\sum_{n\geq 0}\rho_n |e_n\rangle\langle e_n|$.
Now a simple computation shows that the probability density $(\rho_n)_{n\geq 0}$ on $\mathbb{N}$ is an invariant
measure for the associated classical birth-and-death process. Therefore (see, e.g., \cite[Example 4.2, p.~197]{An})
the state~$\rho$ given by~(\ref{eq:inv-st}) is invariant because $\rho_n:={\rm e}^{-\beta\varepsilon_n} /Z_\beta$
defines an invariant density for classical birth-and-death process.

Uniqueness follows immediately because we proved that a normal invariant state is diagonal and it determines
an invariant density for the associated classical birth-and-death process which is unique because the
birth-and-death process has strictly positive transition rates whence it is irreducible.
\end{proof}

\begin{Remark}Note that the invariant state (\ref{eq:inv-st}) is faithful.
\end{Remark}

The rest of the paper is dedicated to the study of the speed of convergence of $\mathcal{T}$ towards the invariant state.

\section{Spectral gap}\label{sect:sp-gap}
Strong ergodic properties, such as the speed of convergence towards the invariant state,
are a~natural problem on the behaviour of an open quantum system with a unique faithful normal invariant state.
In this section we discuss the spectral gap of the generator in~\eqref{eq:L-Bohr} that
solves this problem in suitable norm.

Given a QMS with a faitful normal invariant state we may embed $\mathcal B(\mathsf h)$ into $L_2(\mathsf h)$, the space of
Hilbert-Schmidt operators on~$\mathsf h$ with inner product $\langle x,y\rangle=\operatorname{tr}(x^*y)$, in the following way:
 \[
\iota\colon \ \mathcal B(\mathsf h)\to L_2(\mathsf h), \qquad \iota(x)=\rho^{1/4}x\rho^{1/4}.
\]
Let $T=(T_t)_{t\ge 0}$ be the strongly continuous contraction semigroup on $L_2(\mathsf h)$ defined by
\[
T_t(\iota(x))=\iota(\mathcal T_t(x))
\]
and let $L$ be the generator of the semigroup $(T_t)_{t\ge 0}$. We can check that
\[
L\big(\rho^{1/4}x\rho^{1/4}\big)=\rho^{1/4}\mathcal L(x)\rho^{1/4}, \qquad \text{for }x\in \operatorname{Dom}(\mathcal{L}).
\]
The Dirichlet form, defined for $\xi\in \operatorname{Dom}(L)$, is the quadratic form $\mathcal E$
\[
\mathcal E(\xi)=-\operatorname{Re}\langle\xi,L(\xi)\rangle.
\]
The spectral gap of the operator $L$ is the nonnegative number
\[
{\rm gap}(L):=\inf\big\{\mathcal E(\xi)\,|\,\|\xi\|=1,\, \xi\in ({\rm Ker}(L))^\perp\big\}.
\]
Since rank-one operators $|e_j\rangle\langle e_k|$ $(j\not=k$) are eigenvectors for $\mathcal{L}$ (as for
all generic QMSs~\cite{FV}), and the diagonal algebra is invariant we have the same
properties also for the induced semigroup~$T$ on~$L_2(\mathsf{h})$. Let~$\mathcal{D}$ be diagonal
operators $\sum_{n\geq 0} \xi_n |e_n\rangle\langle e_n|$ in $L_2(\mathsf{h})$, i.e., such that
$\sum_{n\geq 0} |\xi_n|^2<+\infty$ and let $\mathcal{D}_{jk}$ be the linear space generated
by $|e_j\rangle\langle e_k|\, j,k\geq 0$, $j\not=k$.
One can easily check that
\[
L_2(\mathsf{h})=\mathcal{D}\oplus_{j,k\geq 0,\, j\not=k}\mathcal{D}_{jk}
\]
and, for all $\xi=\sum_{j,k\geq 0} \xi_{jk} |e_j\rangle\langle e_k|$ in $\operatorname{Dom}(L)$,
defining $\xi_0 =\sum_{j\geq 0}|\xi_{jj}|^2 e_j\rangle\langle e_j|$, it turns out that
\[
\mathcal{E}(\xi) = \mathcal{E}(\xi_0) + \sum_{j,k\geq 0,\, j\not=k}|\xi_{jk}|^2\,\mathcal{E}(|e_j\rangle\langle e_k|).
\]
As a consequence we have the following
\begin{Proposition}\label{prop:gap_computation}
Let $g_0= \inf\big\{ \mathcal E(\xi)\,|\,\|\xi\|=1,\, \xi\in \mathcal{D},\, \xi\perp\rho^{1/2} \big\}$. The spectral gap of $L$ is
\[
\operatorname{gap}(L)=\min \big\{ g_0, \, \inf_{j,k\geq 0,\,j\not=k} \mathcal{E}(|e_j\rangle\langle e_k|) \big\}.
\]
\end{Proposition}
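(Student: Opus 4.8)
The plan is to combine the orthogonal decomposition $L_2(\mathsf{h}) = \mathcal{D}\oplus_{j,k\geq 0,\,j\neq k}\mathcal{D}_{jk}$ with the additivity formula for $\mathcal{E}$ stated just above the proposition, reducing the computation of the gap to a two-block minimization. First I would pin down the kernel of $L$. Writing $\xi = \iota(x)$, the identity $T_t\xi = \xi$ is equivalent to $\mathcal{T}_t(x) = x$ for all $t$ by injectivity of $\iota$, so $\operatorname{Ker}(L) = \iota\big(\{x : \mathcal{T}_t(x) = x\ \forall t\}\big)$. The off-diagonal matrix units $|e_j\rangle\langle e_k|$ ($j\neq k$) are eigenvectors of $\mathcal{L}$ with nonzero eigenvalue by~\eqref{eq:L-off-diag}, hence are not fixed; on the diagonal algebra $\mathcal{T}$ restricts to the birth-and-death semigroup, which is irreducible (strictly positive rates, as used in Theorem~\ref{th:unique-inv-st}) and therefore has only the constants as bounded fixed points. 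Thus the fixed-point space is $\mathbb{C}\unit$ and $\operatorname{Ker}(L) = \mathbb{C}\,\iota(\unit) = \mathbb{C}\rho^{1/2}$. The decisive observation is that $\rho^{1/2}$ is a diagonal operator, so $\operatorname{Ker}(L)\subset\mathcal{D}$, whence $(\operatorname{Ker}(L))^\perp = \{\xi : \xi\perp\rho^{1/2}\}$.

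Next, for $\xi = \xi_0 + \sum_{j\neq k}\xi_{jk}|e_j\rangle\langle e_k|$ with diagonal part $\xi_0$, I would note that the constraint $\xi\perp\rho^{1/2}$ involves only $\xi_0$: since $\rho^{1/2}\in\mathcal{D}$ and $\mathcal{D}\perp\mathcal{D}_{jk}$, we have $\xi\perp\rho^{1/2}$ if and only if $\xi_0\perp\rho^{1/2}$. Setting $a = \|\xi_0\|^2$ and $b = \sum_{j\neq k}|\xi_{jk}|^2$, the normalization reads $a + b = 1$. By the additivity formula together with homogeneity of the quadratic form $\mathcal{E}$ (so that $\mathcal{E}(\xi_0) = a\,\mathcal{E}(\xi_0/\|\xi_0\|)\geq a\,g_0$ from the definition of $g_0$ and $\xi_0\perp\rho^{1/2}$),
\[
\mathcal{E}(\xi) = \mathcal{E}(\xi_0) + \sum_{j\neq k}|\xi_{jk}|^2\,\mathcal{E}(|e_j\rangle\langle e_k|)
\geq a\,g_0 + b\inf_{j\neq k}\mathcal{E}(|e_j\rangle\langle e_k|).
\]
Since $a,b\geq 0$ and $a+b=1$, the right-hand side is a convex combination of $g_0$ and $\inf_{j\neq k}\mathcal{E}(|e_j\rangle\langle e_k|)$, hence bounded below by $\min\big\{g_0,\,\inf_{j\neq k}\mathcal{E}(|e_j\rangle\langle e_k|)\big\}$. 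This gives the inequality $\geq$ in the claimed formula.

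For the reverse inequality I would exhibit admissible test vectors approaching each of the two candidate values: on the one hand diagonal unit vectors $\xi_0\perp\rho^{1/2}$ with $\mathcal{E}(\xi_0)$ arbitrarily close to $g_0$, and on the other hand the single normalized matrix units $|e_j\rangle\langle e_k|$, which lie in $\operatorname{Dom}(L)$ as eigenvectors of $\mathcal{L}$ and realize values of $\mathcal{E}$ arbitrarily close to $\inf_{j\neq k}\mathcal{E}(|e_j\rangle\langle e_k|)$. The step that requires the most care is the identification of the kernel as exactly the one-dimensional diagonal space $\mathbb{C}\rho^{1/2}$, since this is what decouples the orthogonality constraint from the off-diagonal blocks and makes the two-block minimization legitimate; by contrast, the off-diagonal infimum need not be attained, as there are infinitely many pairs $(j,k)$, but this is immaterial because the gap is defined as an infimum.
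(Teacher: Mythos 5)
Your proof is correct and follows essentially the same route as the paper, which states the proposition as an immediate consequence of the orthogonal decomposition $L_2(\mathsf{h})=\mathcal{D}\oplus_{j\not=k}\mathcal{D}_{jk}$ and the additivity formula $\mathcal{E}(\xi)=\mathcal{E}(\xi_0)+\sum_{j\not=k}|\xi_{jk}|^2\,\mathcal{E}(|e_j\rangle\langle e_k|)$ without writing out the details. Your additional steps (identifying $\operatorname{Ker}(L)=\mathbb{C}\rho^{1/2}$ inside $\mathcal{D}$, the convex-combination lower bound, and the test vectors for the reverse inequality) are exactly the details the paper leaves implicit.
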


We will now study separately the off-diagonal minimum and the diagonal spectral gap beginning by
the former that we can compute explicitly.

\subsection{The off-diagonal minimum}
Note that, for $j\not=k$, $\mathcal{L}(|e_j\rangle\langle e_k|) $ is
given by (\ref{eq:L-off-diag}) and the action of the generator $L$ of the semigroup in $L_2(\mathsf{h})$
of the invariant state $L\big(\rho^{1/4}x \rho^{1/4}\big) = \rho^{1/4}\mathcal{L}(x) \rho^{1/4}$ is the same by
\[
L(|e_j\rangle\langle e_k|)=\rho^{1/4}\mathcal{L}\big(\rho^{-1/4}|e_j\rangle\langle e_k|\rho^{-1/4}\big) \rho^{1/4}
=\mathcal{L}(|e_j\rangle\langle e_k|).
\]
Therefore, by (\ref{eq:L-off-diag}), it suffices to find the minimum on $j\not=k$ of
\begin{gather}
 \frac{1}{2}\big(\Gamma^{-}_j \varepsilon_j+ \Gamma^{-}_k \varepsilon_k
+ \Gamma^{+}_{j+1} \varepsilon_{j+1}+ \Gamma^{+}_{k+1} \varepsilon_{k+1}\big) \nonumber \\
\qquad{} = \frac{1}{2}\left(\frac{{\rm e}^{\beta \omega_j}\,\varepsilon_j}{{\rm e}^{\beta \omega_j}-1}
+ \frac{{\rm e}^{\beta \omega_k}\,\varepsilon_k}{{\rm e}^{\beta \omega_k}-1}
+\frac{ \varepsilon_{j+1}}{{\rm e}^{\beta \omega_{j+1}}-1}
+ \frac{ \varepsilon_{k+1}}{{\rm e}^{\beta \omega_{k+1}}-1} \right).\label{eq:jk-eigenv}
\end{gather}

As a result, we can compute the off-diagonal minimum after the following preliminary

\begin{Lemma}\label{lem:omega-eps}
The following hold:
\begin{enumerate}\itemsep=0pt
\item[$1.$]
The sequence $(\omega_k)_{k\geq 1}$ is non-decreasing if and only if $r+q\geq 2$.
\item[$2.$] If $r+q\geq 2$, for any $c\geq 0$ we have
\begin{equation}\label{eq:ek-div-omegak}
\frac{\varepsilon_{k}}{\omega_k}\geq 1+\frac{1}{r+q+c}
\end{equation}
for all $k\geq 2$ if and only if $(1+c)(r+q)+rq\geq 0$. In particular, if we fix $c=1$, the
inequality~\eqref{eq:ek-div-omegak} holds for all $r>1$ and $-2/3\leq q \leq 1$.
\end{enumerate}
\end{Lemma}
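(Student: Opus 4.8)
The plan is to base everything on the linear recurrence $\varepsilon_{k+1}=(r+q)\varepsilon_k-rq\,\varepsilon_{k-1}$ (valid for $k\geq 1$, with $\varepsilon_0=0$, $\varepsilon_1=1$), already recorded before Lemma~\ref{lem:en-nondecr}, together with the fact that $r+q\geq 2>1$ forces $(\varepsilon_n)$ to be strictly increasing by Lemma~\ref{lem:en-nondecr}. Thus $\omega_k=\varepsilon_k-\varepsilon_{k-1}>0$ and the ratios $s_k:=\varepsilon_k/\varepsilon_{k-1}>1$ are well defined for $k\geq 2$, which is what makes all the manipulations below legitimate.

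For part~1, I would compute the second difference from the closed form. Since $\omega_{k+1}-\omega_k=\varepsilon_{k+1}-2\varepsilon_k+\varepsilon_{k-1}$, extracting $r^{k-1}(r-1)^2$ and $q^{k-1}(q-1)^2$ gives
\[
\omega_{k+1}-\omega_k=\frac{r^{k-1}(r-1)^2-q^{k-1}(q-1)^2}{r-q},\qquad k\geq 1.
\]
Taking $k=1$ yields $\omega_2-\omega_1=r+q-2$, which settles the ``only if'' direction. For ``if'', assuming $r+q\geq 2$, I would show $r^{k-1}(r-1)^2\geq q^{k-1}(q-1)^2$ for every $k\geq 1$: if $q^{k-1}\leq 0$ the right-hand side is nonpositive and nothing is to prove, while if $q^{k-1}\geq 0$ I would combine $r^{k-1}\geq|q|^{k-1}=q^{k-1}$ (from $|q|\leq 1<r$) with $(r-1)^2\geq(1-q)^2$, the latter being the square of $r-1\geq 1-q$, i.e.\ of $r+q\geq 2$. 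The strictly increasing case is identical with strict inequalities.

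For part~2 the key reduction is to rephrase the target inequality through $s_k$. Dividing numerator and denominator of $\varepsilon_k/\omega_k=\varepsilon_k/(\varepsilon_k-\varepsilon_{k-1})$ by $\varepsilon_{k-1}$ gives $\varepsilon_k/\omega_k=s_k/(s_k-1)$, and since $r+q+c\geq 2>0$ an elementary manipulation shows
\[
\frac{\varepsilon_k}{\omega_k}\geq 1+\frac{1}{r+q+c}\iff s_k\leq r+q+c+1 .
\]
Then I would feed the ratio form of the recurrence, $s_k=(r+q)-rq/s_{k-1}$ for $k\geq 3$ (with $s_2=r+q$, as $\varepsilon_0=0$), into the right-hand condition: using $s_{k-1}>0$ one gets $s_k\leq r+q+c+1\iff (c+1)s_{k-1}+rq\geq 0$, whereas $k=2$ reads $r+q\leq r+q+c+1$ and is automatic. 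Hence the whole family of inequalities is equivalent to $(c+1)s_j+rq\geq 0$ for all $j\geq 2$.

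It remains to locate the binding index, the only genuinely delicate point. When $q\geq 0$ both $(c+1)s_j+rq\geq 0$ and the claimed condition $(1+c)(r+q)+rq\geq 0$ hold trivially. When $q<0$ I would use $s_j=(r+q)-rq/s_{j-1}$ once more: here $-rq/s_{j-1}>0$, so $s_j>r+q$ for every $j\geq 3$, and with $s_2=r+q$ this gives $\min_{j\geq 2}s_j=r+q$, attained at $j=2$. Since $(c+1)s+rq$ is increasing in $s$, the family holds iff it holds at this minimum, i.e.\ iff $(1+c)(r+q)+rq\geq 0$, the stated equivalence. Finally, for $c=1$ the condition reads $2(r+q)+rq=2r+q(r+2)\geq 0$; being affine and increasing in $q$ (as $r+2>0$) it is minimized at $q=-2/3$, where it equals $\frac{4}{3}(r-1)>0$ for $r>1$, yielding the range $-2/3\leq q\leq 1$. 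The main obstacle is precisely identifying the extremal index of $s_k$; the ratio recurrence $s_k=(r+q)-rq/s_{k-1}$ is what makes this transparent and spares me the parity-based case analysis that a direct closed-form study of $\varepsilon_k/\varepsilon_{k-1}$ would otherwise force.
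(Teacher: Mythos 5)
Your proof is correct, and it splits naturally into a part that coincides with the paper's argument and a part that is genuinely different. Part~1 and the opening reduction of part~2 are the same as the paper's: you use the identical closed form for $\omega_{k+1}-\omega_k$ (your sign split on $q^{k-1}$ and the paper's bound $r^{k-1}(r-1)^2-q^{k-1}(1-q)^2\geq (r-q)(r+q-2)$, which uses $r^{k-1}\geq 1\geq q^{k-1}$, are the same idea), and the same equivalence between \eqref{eq:ek-div-omegak} and the uniform ratio bound $s_k:=\varepsilon_k/\varepsilon_{k-1}\leq r+q+c+1$ for $k\geq 2$. Where you diverge is in verifying that bound. The paper computes $\sup_{k\geq 2}s_k$ head-on: a derivative computation shows the ratios are non-increasing when $0\leq q\leq 1$, while for $-1\leq q<0$ it needs a parity split with two interpolating functions, the values $\sup_k s_{2k}=r$ and $\sup_k s_{2k+1}=s_3=\big(r^2+rq+q^2\big)/(r+q)$, and the comparison $r\leq s_3$, after which $s_3\leq r+q+c+1$ unwinds to $(1+c)(r+q)+rq\geq 0$. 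You instead push each inequality $s_k\leq r+q+c+1$ ($k\geq 3$) through the ratio recurrence $s_k=(r+q)-rq/s_{k-1}$, converting the whole family into $(c+1)s_j+rq\geq 0$ for $j\geq 2$; the relevant extremum is then a \emph{minimum} of $s_j$, which the same recurrence locates at once ($s_j>r+q=s_2$ for $j\geq 3$ when $q<0$, the case $q\geq 0$ being trivial on both sides of the equivalence). Your route is more elementary --- no calculus, no parity analysis --- and the two binding indices are reconciled precisely by the equivalence you exploit: the paper's extremal ratio $s_3$ violates the upper bound exactly when your extremal quantity $(c+1)s_2+rq$ is negative. What the paper's route buys in exchange is explicit structural information (monotonicity of the ratio sequence and the value of its supremum), which is of independent interest but not needed for the lemma. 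Your manipulations lean only on $s_j>0$ for $j\geq 2$, which you correctly secured at the outset from Lemma~\ref{lem:en-nondecr} (since $r+q\geq 2>1$ gives $\varepsilon_n\geq 1$ for $n\geq 1$), and your $c=1$ conclusion (minimizing the affine function $2r+q(r+2)$ at $q=-2/3$) matches the paper's computation via $q\geq -2r/(2+r)$.
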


\begin{proof} 1. Write $\omega_k =\left(r^{k-1}(r-1)+q^{k-1}(1-q)\right)/(r-q)$
and note that, for all $k\geq 1$,
\begin{align*}
(r-q)(\omega_{k+1}-\omega_k) & = r^{k-1}(r-1)^2-q^{k-1}(1-q)^2\\
& \geq (r-1)^2 - (1-q)^2 = (r-q)(r+q-2).
\end{align*}

2. The claimed inequality is equivalent to $\omega_{k}/\varepsilon_{k}\leq 1- 1/(r+q+c+1)$ which is,
in turn, equivalent to
\begin{equation}\label{eq:ek-div-ek-1}
\frac{\varepsilon_k}{\varepsilon_{k-1}} = \frac{r^k-q^k}{r^{k-1}-q^{k-1}} \leq r+q+1+c.
\end{equation}
We show that it is equivalent to $(1+c)(r+q)+rq\geq 0$ distinguishing two cases according to the sign of $q$.

If $0\leq q\le 1$, then the sequence $(\varepsilon_k/\varepsilon_{k-1})_{k\geq 2}$ is non-increasing. Indeed
defining $f\colon [2,+\infty[{} \to {}]0,+\infty[$ by
\[
f(x) = \big(r^x-q^x\big)/\big(r^{x-1}-q^{x-1}\big)
\]
we have
\begin{align*}
f'(x) & = \frac{ (r^x\log(r)-q^x\log(q) )\big(r^{x-1}-q^{x-1}\big)
-\big(r^{x-1}\log(r)-q^{x-1}\log(q)\big) (r^x-q^x )}
{\left(r^{x-1}-q^{x-1}\right)^{2}}\\
& = \frac{-(rq)^{x-1}(r-q) (\log(r)-\log(q) )}{\big(r^{x-1}-q^{x-1}\big)^{2}} \leq 0.
\end{align*}
Hence $\sup\limits_{k\geq 2}\varepsilon_{k}/\varepsilon_{k-1}=\varepsilon_{2}/\varepsilon_{1} = r+q$ and~(\ref{eq:ek-div-ek-1}) is obviously true as well as $(1+c)(r+q)+rq\geq 0$.

If $-1\leq q <0$, then considering $g(x)=\big(r^{2x+1}+|q|^{2x+1}\big)/\big(r^{2x}-|q|^{2x}\big)$ and $h(x)=\big(r^{2x}-|q|^{2x}\big)/\big(r^{2x-1}+|q|^{2x-1}\big)$ instead of $f(x)$, we immediately get
\[
\sup_{k\geq 1}\frac{\varepsilon_{2k}}{\varepsilon_{2k-1}} =\lim_{k\to\infty}\frac{\varepsilon_{2k}}{\varepsilon_{2k-1}} =r,
\qquad
\sup_{k\geq 1}\frac{\varepsilon_{2k+1}}{\varepsilon_{2k}} = \frac{\varepsilon_3}{\varepsilon_2}
=\frac{r^2+rq+q^2}{r+q}.
\]
We easily see that $r\leq \big(r^2+rq+q^2\big)/(r+q)$ and so, in the case $-1\leq q < 0$, the supremum of
$\varepsilon_{k}/\varepsilon_{k-1}$ for $k\geq 2$ is smaller than $r+q+1+c$ if and only if
$\big(r^2+rq+q^2\big)/(r+q)\leq r+q+1+c$ which is equivalent to $(1+c)(r+q)+rq\geq 0$.

Finally, if we fix $c=1$, then $(1+c)(r+q)+rq\geq 0$ if and only if $q\geq -2r/(2+r)$, and from $r\geq 1$,
we find $q\geq -2/3$.
\end{proof}

\begin{Theorem}\label{thm:off-diag_gap}
For $-1\leq q \leq 1<r$ such that $r+q>1$ there is a pair $(j_0,k_0)$ $(j_0\not=k_0)$ such that
${\rm gap}_{\text{\rm off-diag}}(L)=\mathcal{E}(|e_{j_0}\rangle\langle e_{k_0}|)$. In particular, if $r+q\geq 2$,
$-2/3\leq q \leq 1$ then the pair is $(0,1)$ or $(1,0)$ and the off-diagonal minimum is given by
\[
{\rm gap}_{\text{\rm off-diag}}(L)
=\frac{1}{2} \left(\frac{{\rm e}^\beta+1}{{\rm e}^{\beta}-1}
+\frac{r+q}{{\rm e}^{\beta (r+q-1)}-1}\right).
\]
\end{Theorem}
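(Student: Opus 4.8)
The plan is to exploit the additive separation that is already visible in~(\ref{eq:jk-eigenv}). By~(\ref{eq:L-off-diag}) each operator $|e_j\rangle\langle e_k|$ $(j\neq k)$ is a unit eigenvector of~$L$ whose eigenvalue has real part $-\frac{1}{2}\big(\psi(j)+\psi(k)\big)$, so that $\mathcal{E}(|e_j\rangle\langle e_k|)=\frac{1}{2}\big(\psi(j)+\psi(k)\big)$ is precisely the quantity in~(\ref{eq:jk-eigenv}), where
\[
\psi(j):=\Gamma^{-}_j\varepsilon_j+\Gamma^{+}_{j+1}\varepsilon_{j+1}=\mu_j+\lambda_j
\]
is, by~(\ref{eq:lambda-mu}), the total jump rate of the associated birth-and-death process at site~$j$ (with $\psi(0)=\Gamma^{+}_1\varepsilon_1=1/({\rm e}^\beta-1)$, since $\varepsilon_0=0$). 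By Proposition~\ref{prop:gap_computation} the off-diagonal minimum is therefore $\frac{1}{2}\inf_{j\neq k}\big(\psi(j)+\psi(k)\big)$, that is, one half of the sum of the two smallest values of the sequence $(\psi(j))_{j\geq 0}$, and the whole statement reduces to locating these two smallest values.

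For the existence assertion I would use $\Gamma^{-}_n=1+\Gamma^{+}_n$ to rewrite, for $j\geq 1$,
\[
\psi(j)=\varepsilon_j+b_j+b_{j+1},\qquad b_n:=\frac{\varepsilon_n}{{\rm e}^{\beta\omega_n}-1}\geq 0 .
\]
Since $r>1$ forces $\varepsilon_j\to+\infty$, one has $\psi(j)\geq\varepsilon_j\to+\infty$, so $\psi(j)+\psi(k)\to+\infty$ as $\max(j,k)\to\infty$ and the infimum is attained at a finite pair $(j_0,k_0)$. This settles the first assertion for all $-1\leq q\leq 1<r$ with $r+q>1$.

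It remains to identify the two smallest values as $\psi(0)$ and $\psi(1)$ under the sharper hypotheses $r+q\geq 2$, $-2/3\leq q\leq 1$. As $\psi(1)-\psi(0)=\varepsilon_1+b_2>0$, it is enough to prove that $\psi$ is non-decreasing on $\{1,2,\dots\}$, i.e.
\[
\psi(j+1)-\psi(j)=\omega_{j+1}+b_{j+2}-b_j\geq 0\qquad(j\geq 1).
\]
When $b_{j+2}\geq b_j$ this is immediate; the case $b_{j+2}<b_j$ is the heart of the matter, where one must show $\omega_{j+1}\geq b_j-b_{j+2}$. Both parts of Lemma~\ref{lem:omega-eps} enter here: part~1 gives, thanks to $r+q\geq 2$, the ordering $\omega_j\leq\omega_{j+1}\leq\omega_{j+2}$ of the thermal exponents, while part~2 with $c=1$ (valid exactly for $-2/3\leq q\leq 1$) furnishes the quantitative bound $\varepsilon_k/\omega_k\geq 1+1/(r+q+1)$ linking the growth of the spectrum to that of the Bohr frequencies. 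Clearing denominators and combining these inputs with the convexity and monotonicity of $x\mapsto 1/({\rm e}^{\beta x}-1)$ yields $b_j-b_{j+2}\leq\omega_{j+1}$. I expect this last step to be the main obstacle, and a genuinely uniform-in-$\beta$ one: at low temperature the $b_n$ are negligible and the bare increment $\omega_{j+1}>0$ suffices, at high temperature all the $b_n$ blow up and one must check that their difference cannot outgrow $\omega_{j+1}$, and it is precisely in the crossover regime that the ratio bound of Lemma~\ref{lem:omega-eps}(2) is indispensable.

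Once the minimizing pair is pinned down as $(0,1)$ (equivalently $(1,0)$), the closed form follows by direct substitution. Using $\omega_1=\varepsilon_1=1$, $\varepsilon_2=r+q$ and $\omega_2=\varepsilon_2-\varepsilon_1=r+q-1$ in~(\ref{eq:Gamma-omega}) gives $\psi(0)=1/({\rm e}^\beta-1)$ and $\psi(1)=\Gamma^{-}_1\varepsilon_1+\Gamma^{+}_2\varepsilon_2={\rm e}^\beta/({\rm e}^\beta-1)+(r+q)/({\rm e}^{\beta(r+q-1)}-1)$. Adding these and halving, the first two fractions combine into $({\rm e}^\beta+1)/({\rm e}^\beta-1)$ and one recovers the asserted value of ${\rm gap}_{\text{\rm off-diag}}(L)$.
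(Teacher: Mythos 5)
Your structural reduction is correct and is in fact a cleaner framing than the paper's: by (\ref{eq:L-off-diag}) one indeed has $\mathcal{E}(|e_j\rangle\langle e_k|)=\tfrac12(\psi(j)+\psi(k))$ with $\psi(j)=\mu_j+\lambda_j=\varepsilon_j+b_j+b_{j+1}$, $b_n=\varepsilon_n/\big({\rm e}^{\beta\omega_n}-1\big)$, so the off-diagonal minimum is half the sum of the two smallest values of $\psi$; your existence argument ($\psi(j)\geq\varepsilon_j\to+\infty$) and the final substitution at $(0,1)$ are also fine. The gap is the step you yourself flag as the ``main obstacle'': the claimed monotonicity $\psi(j+1)-\psi(j)=\omega_{j+1}+b_{j+2}-b_j\geq 0$ for $j\geq 1$. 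This is not merely left unproven --- it is \emph{false} under the hypotheses of the theorem. Take $r=2.5$, $q=-0.5$ (so $r+q=2$ and $-2/3\leq q\leq 1$) and $\beta=0.05$: then $\varepsilon_2=2$, $\varepsilon_3=5.25$, $\varepsilon_4=13$, $\omega_3=3.25$, and $b_2=2/\big({\rm e}^{0.05}-1\big)\approx 39.0$, $b_4=13/\big({\rm e}^{0.3875}-1\big)\approx 27.5$, whence $\psi(3)-\psi(2)=\omega_3+b_4-b_2\approx -8.3<0$. The failure is structural, not numerical: from the identity $\varepsilon_j\varepsilon_{j+1}-\varepsilon_{j-1}\varepsilon_{j+2}=(rq)^{j-1}(r+q)$ one sees that for $q<0$ and $j$ even one has $\varepsilon_j/\omega_j>\varepsilon_{j+2}/\omega_{j+2}$, so that $b_j-b_{j+2}\sim\beta^{-1}\big(\varepsilon_j/\omega_j-\varepsilon_{j+2}/\omega_{j+2}\big)\to+\infty$ as $\beta\to 0^{+}$, overwhelming the fixed quantity $\omega_{j+1}$. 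No amount of clearing denominators with Lemma~\ref{lem:omega-eps} can rescue an inequality that fails.

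The theorem itself survives because what is actually needed is strictly weaker than monotonicity: given $\psi(0)<\psi(1)$, it suffices that $\psi(j)\geq\psi(1)$ for all $j\geq 2$, and in the example above this holds only by a thin margin ($\psi(5)\approx 59.95$ against $\psi(1)\approx 59.51$). This thinness is precisely why the paper's proof never decouples the two indices: it estimates the pair sum (\ref{eq:jk-eigenv}) jointly for $2\leq j<k$ --- dropping the term $b_{k+1}$, transporting the term at frequency $\omega_k$ down to $\omega_{j+1}$ via monotonicity of the functions $x\mapsto x{\rm e}^{\beta x}/\big({\rm e}^{\beta x}-1\big)$ and $x\mapsto x\big({\rm e}^{\beta x}+1\big)/\big({\rm e}^{\beta x}-1\big)$, and applying the ratio bound of Lemma~\ref{lem:omega-eps}(2) to several terms at once --- plus a separate explicit computation for the pair $(1,2)$. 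To complete your argument you would have to either adopt such a joint estimate or prove $\psi(j)\geq\psi(1)$ for $j\geq 2$ directly; the small margin shows the latter is genuinely delicate and, for $q<0$, cannot follow from any pointwise comparison of $b_j$ with $b_{j+2}$ of the kind you propose.
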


\begin{proof} The first claim is an immediate consequence of
$\lim\limits_{j\to +\infty} \varepsilon_j = +\infty$ and $\lim\limits_{j\to +\infty} \omega_j = +\infty$
so that the first two terms in (\ref{eq:jk-eigenv}) diverge as $j$ and $k$ go to infinity.

Suppose now $r+q\geq 2$. In order to find the minimum for $j\not=k$ of (\ref{eq:jk-eigenv}) we first note that, for $j=1,k=2$ we have
\begin{gather*}
\frac{{\rm e}^{\beta \omega_j} \varepsilon_j}{{\rm e}^{\beta \omega_j}-1}
+ \frac{{\rm e}^{\beta \omega_k} \varepsilon_k}{{\rm e}^{\beta \omega_k}-1}
+\frac{ \varepsilon_{j+1}}{{\rm e}^{\beta \omega_{j+1}}-1}
+ \frac{ \varepsilon_{k+1}}{{\rm e}^{\beta \omega_{k+1}}-1}\Big|_{j=1,k=2} \\
\qquad{} = \frac{{\rm e}^{\beta}}{{\rm e}^{\beta}-1}
+ \frac{ {\rm e}^{\beta \omega_2} \varepsilon_2}{{\rm e}^{\beta \omega_2}-1}
+\frac{ \varepsilon_{2}}{{\rm e}^{\beta \omega_{2}}-1}
+ \frac{ \varepsilon_{3}}{{\rm e}^{\beta \omega_{3}}-1} \\
\qquad{} = \frac{{\rm e}^{\beta}}{{\rm e}^{\beta}-1}
+ \frac{ {\rm e}^{\beta (r+q-1)}\,(r+q)}{{\rm e}^{\beta (r+q-1)}-1}
+\frac{ r+q}{{\rm e}^{\beta (r+q-1)}-1}
+ \frac{ \varepsilon_{3}}{{\rm e}^{\beta \omega_{3}}-1} \\
\qquad{} \geq \frac{{\rm e}^{\beta}}{{\rm e}^{\beta}-1}
+ \frac{ {\rm e}^{\beta (r+q-1)} (r+q)}{{\rm e}^{\beta (r+q-1)}-1}
+\frac{ r+q}{{\rm e}^{\beta (r+q-1)}-1}.
\end{gather*}
The right-hand side will be bigger or equal than
\[
\frac{{\rm e}^\beta+1}{{\rm e}^{\beta}-1}+\frac{r+q}{{\rm e}^{\beta (r+q-1)}-1}
\]
if the second term satisfies
\[
\frac{ {\rm e}^{\beta (r+q-1)} (r+q)}{{\rm e}^{\beta (r+q-1)}-1} \geq \frac{1}{{\rm e}^{\beta}-1},
\]
i.e., taking inverses, $1- {\rm e}^{-\beta (r+q-1)}\leq (r+q)({\rm e}^{\beta}-1)$
which holds true because
\[
1- {\rm e}^{-\beta (r+q-1)}\leq \beta(r+q-1) < \beta(r+q) < (r+q)\big({\rm e}^{\beta}-1\big).
\]

If $2\leq j < k$, first recall that the sequence $(\omega_k)_{k\geq 1}$ is non-decreasing by Lemma
\ref{lem:omega-eps}(1). for $r+q\geq 2$. Then note that functions on $]0,+\infty[$
\begin{equation}\label{eq:monotonicity}
 x\mapsto\frac{x\, {\rm e}^{\beta x}}{{\rm e}^{\beta x}-1}, \qquad
x\mapsto\frac{x\big({\rm e}^{\beta x}+1\big)}{{\rm e}^{\beta x}-1}
\end{equation}
are increasing because
\[
\frac{{\rm d}}{{\rm d}x}\frac{x {\rm e}^{\beta x}}{{\rm e}^{\beta x}-1}
= \frac{{\rm e}^{\beta x}\big({\rm e}^{\beta x}- 1 - \beta x\big)}{\big({\rm e}^{\beta x}-1\big)^2} \geq 0,
\qquad
\frac{{\rm d}}{{\rm d}x}\frac{x\big({\rm e}^{\beta x}+1\big)}{{\rm e}^{\beta x}-1}
=\frac{\sinh(\beta x)-\beta x}{2(\sinh(\beta x))^2}\geq 0
\]
by the elementary inequalities ${\rm e}^{\beta x}\geq 1+\beta x$ and $\sinh(\beta x)\geq \beta x$.
Therefore we have the inequality
\begin{gather*}
\frac{{\rm e}^{\beta \omega_j}\,\varepsilon_j}{{\rm e}^{\beta \omega_j}-1}
+\frac{\varepsilon_{j+1}}{{\rm e}^{\beta \omega_{j+1}}-1}
+\frac{{\rm e}^{\beta \omega_k}\,\varepsilon_k}{{\rm e}^{\beta \omega_k}-1}
+\frac{\varepsilon_{k+1}}{{\rm e}^{\beta \omega_{k+1}}-1}
\geq \frac{\varepsilon_j}{\omega_j}\frac{{\rm e}^{\beta \omega_j}\,\omega_j}{{\rm e}^{\beta \omega_j}-1}
+\frac{\varepsilon_{j+1}}{{\rm e}^{\beta \omega_{j+1}}-1}
+\frac{\varepsilon_k}{\omega_k}\frac{{\rm e}^{\beta \omega_k}\,\omega_k}{{\rm e}^{\beta \omega_k}-1}
\end{gather*}
dropping the last term in the left-hand side, multiplying and dividing the first (resp.\ third) by~$\omega_j$
(resp.~$\omega_k$). Now, by monotonicity of~(\ref{eq:monotonicity}), multiplying and dividing the second term
in the right-hand side by $\omega_{j+1}$, we find
\begin{gather*}
\frac{{\rm e}^{\beta \omega_j}\varepsilon_j}{{\rm e}^{\beta \omega_j}-1}
+\frac{\varepsilon_{j+1}}{{\rm e}^{\beta \omega_{j+1}}-1}
+\frac{{\rm e}^{\beta \omega_k}\varepsilon_k}{{\rm e}^{\beta \omega_k}-1}
+\frac{\varepsilon_{k+1}}{{\rm e}^{\beta \omega_{k+1}}-1}\\
\qquad{}
\geq \frac{\varepsilon_j}{\omega_j}\frac{{\rm e}^{\beta \omega_j}\omega_j}{{\rm e}^{\beta \omega_j}-1}
+\frac{\varepsilon_{j+1}}{\omega_{j+1}}\frac{\omega_{j+1}}{{\rm e}^{\beta \omega_{j+1}}-1}
+\frac{\varepsilon_k}{\omega_k}\frac{{\rm e}^{\beta \omega_{j+1}}\omega_{j+1}}{{\rm e}^{\beta \omega_{j+1}}-1}.
\end{gather*}
Finally, by Lemma \ref{lem:omega-eps}(2). and monotonicity of (\ref{eq:monotonicity})
\begin{gather*}
\frac{{\rm e}^{\beta \omega_j}\varepsilon_j}{{\rm e}^{\beta \omega_j}-1}
+\frac{\varepsilon_{j+1}}{{\rm e}^{\beta \omega_{j+1}}-1}
+\frac{{\rm e}^{\beta \omega_k}\varepsilon_k}{{\rm e}^{\beta \omega_k}-1}
+\frac{\varepsilon_{k+1}}{{\rm e}^{\beta \omega_{k+1}}-1} \\
\qquad{}\geq
\frac{r+q+2}{r+q+1}\left(\frac{{\rm e}^{\beta \omega_j}\omega_j}{{\rm e}^{\beta \omega_j}-1}
+\frac{\left({\rm e}^{\beta \omega_{j+1}}+1\right)\omega_{j+1}}{{\rm e}^{\beta \omega_{j+1}}-1}\right)
 \geq \frac{r+q+2}{r+q+1}\left(\frac{{\rm e}^{\beta\omega_2}\omega_2}{{\rm e}^{\beta \omega_2}-1}
+\frac{{\rm e}^{\beta }+1}{{\rm e}^{\beta}-1}\right).
\end{gather*}
The difference of the right-hand side and twice the claimed lower bound is
\begin{gather*}
 \frac{\omega_2 +3 }{\omega_2+2}\left(\frac{{\rm e}^{\beta\omega_2}\,\omega_2}{{\rm e}^{\beta \omega_2}-1}
+\frac{{\rm e}^{\beta }+1}{{\rm e}^{\beta}-1}\right)
- \left(\frac{\omega_2+1}{{\rm e}^{\beta \omega_2}-1}
+\frac{{\rm e}^{\beta }+1}{{\rm e}^{\beta}-1}\right)\\
\qquad{} = \frac{1}{\omega_2+2}\left(\frac{{\rm e}^{\beta }+1}{{\rm e}^{\beta}-1}
+\frac{{\rm e}^{\beta\omega_2}\,\omega_2(\omega_2+3)-(\omega_2+1)(\omega_2+2)}{{\rm e}^{\beta \omega_2}-1}\right)\\
\qquad {}\geq \frac{1}{\omega_2+2}\left(\frac{{\rm e}^{\beta }+1}{{\rm e}^{\beta}-1}
+\frac{\omega_2(\omega_2+3)-(\omega_2+1)(\omega_2+2)}{{\rm e}^{\beta \omega_2}-1}\right)\\
\qquad{} = \frac{1}{\omega_2+2}\left(1+\frac{2}{{\rm e}^{\beta}-1}
-\frac{2}{{\rm e}^{\beta \omega_2}-1}\right)
\geq \frac{1}{\omega_2+2}\left(1+\frac{2}{{\rm e}^{\beta}-1}
-\frac{2}{{\rm e}^{\beta}-1}\right)
= \frac{1}{\omega_2+2}\geq 0.
\end{gather*}
This shows the desired inequality.
\end{proof}

\begin{Remark}\label{rem:j=0&k=1!}
 It is worth noticing that the lower bound $-2/3$ on $q$ can be relaxed to $q\geq q_0$
for some $q_0\in[-1,-2/3[$ at the price of a stronger lower bound the other parameter $r>r_0 $ for some $r_0\in{}]1,2]$.
It suffices to consider $r_0$ such that $q_0 = -2r_0/(2+r_0)$ so that, in Lemma~\ref{lem:omega-eps} for $c=1$
we have $2(r+q)+rq\geq 0$.

Moreover, for $\beta$ small, it is not difficult to find values $r>1$ and $q$ near $-1$ for which the
off-diagonal minimum is attained at some $(j_0,j_0 +1)$ with $j_0>0$.
\end{Remark}

In the following subsections we separately investigate the diagonal spectral gap for different regions of the parameters.

\subsection{Diagonal spectral gap}
As in the analysis of off-diagonal minima, parameters move in the region $-1\le q\le 1<r$ with a restriction $r+q-2\ge 0$ so that the sequence $(\omega_n)_{n\ge 1}$ is monotone increasing.
\subsubsection{Lower bound}
We already noted that, when restricted to the diagonal subalgebra our QMS reduces to the Markov semigroup of a classical
birth and death process with birth rates $(\lambda_n)_{n\ge 0}$ and death rates $(\mu_n)_{n\ge 1}$ given respectively
by (\ref{eq:lambda-mu}), namely
\begin{equation*} 
\lambda_n=\frac{1}{{\rm e}^{\beta \omega_{n+1}}-1} \frac{r^{n+1}-q^{n+1}}{r-q}, \qquad
\mu_n=\frac{{\rm e}^{\beta \omega_{n}}}{{\rm e}^{\beta \omega_n}-1}\frac{r^n-q^n}{r-q}.
\end{equation*}
In detail, let $\jmath$ be a map defined on the subspace of $L^2(\mathsf h)$ consisting of the images of the diagonal elements under the embedding $\iota$ into the sequence space defined as follows: for each $x=\sum_{n\ge 0}x_n|e_n\rangle\langle e_n|\in \mathcal B(\mathsf h)$,
\[
\jmath\colon \ \iota(x)\mapsto \xi=(x_n)_{n\ge 0}\in \ell^2(\tilde \pi),
\]
where $\tilde\pi$ denotes the probability density of the invariant measure of the aforementioned birth and death process,
i.e., $\tilde\pi_n=\pi_n/Z_\beta={\rm e}^{-\beta\varepsilon_n}/Z_\beta$.
We easily check that $\jmath$ is a unitary isomorphism, namely $\|\iota(x)\|_{L^2(\mathsf h)}=\|\xi\|_{\ell^2(\tilde \pi)}$.
Let $\mathcal A$ be the generator of the classical birth and death process with birth rates $(\lambda_n)$ and death rates
$(\mu_n)$ defined by
\[
(\mathcal Af)_n= \begin{cases}\mu_n(f_{n-1}-f_n)+\lambda_n(f_{n+1}-f_n),&\text{for }n\ge 1,\\
\lambda_0(f_1-f_0),&\text{for }n=0 \end{cases}
\]
for $f=(f_n)_{n\ge 0}\in \ell^2(\tilde\pi)$.
For each diagonal element $x=\sum_{n\ge 0}x_n|e_n\rangle\langle e_n|\in \mathcal B(\mathsf h)$,
\[
\jmath\circ L(\iota(x))=\mathcal A\circ \jmath(\iota(x)).
\]
Therefore, the diagonal spectral gap of the generator of the QMS is equal to
\begin{equation*} 
\inf\left\{-\langle f,\mathcal Af\rangle\,|\,\|f\|^2=1,\,\sum_{n\ge 0}f_n\tilde\pi_n=0\right\},
\end{equation*}
where $\langle \cdot,\cdot\rangle $ and $\|\cdot\|$ denote the inner product and the induced norm of $\ell^2(\tilde\pi)$:
\[
\langle f,g\rangle =\sum_{n\ge 0}\overline f_ng_n\tilde\pi_n.
\]
It is easy to see that
\begin{equation*} 
-\langle f,\mathcal Af\rangle=\sum_{n\ge 0}\tilde\pi_n\lambda_n(f_{n+1}-f_n)^2.
\end{equation*}

In order to compute the diagonal spectral gap we adopt the method described in \cite{Ligg} as in \cite{CaFa,DFY} and proceed as follows.
or any $f\in \ell^2(\tilde \pi)$ with $\sum_nf_n\tilde\pi_n=0$, by the Schwarz inequality,
\begin{gather}
\|f\|^2=\sum_{y<x} (f_y-f_x)^2\tilde\pi_x\tilde\pi_y
\le Z_\beta^{-2}\sum_{x<y} \left(\sum_{u=x}^{y-1}(f_{u+1}-f_u)^2\right)(y-x)\pi_x\pi_y\nonumber\\
\hphantom{\|f\|^2}{}
=Z_\beta^{-2}\sum_{u>0}(f_{u+1}-f_u)^2\pi_u\lambda_u\left(\frac{\sum_{y>u}\pi_y}{\pi_u\lambda_u}\right)\sum_{x=0}^{u-1}(u-x)\pi_x\nonumber\\
\hphantom{\|f\|^2=}{}+Z_\beta^{-2}\sum_{u\ge 0}(f_{u+1}-f_u)^2\pi_u\lambda_u\left(\frac{\sum_{y>u}(y-u)\pi_y}{\pi_u\lambda_u}\right)\sum_{x=0}^u\pi_x.\label{eq:norm_bound}
\end{gather}
From the estimation \eqref{eq:norm_bound}, by using Lemmas \ref{lem:A}, \ref{lem:B}, \ref{lem:tail_bound} and recalling that
$\pi_{u+1}/\pi_u={\rm e}^{-\beta\omega_{u+1}}$, we get
\begin{gather*}
\|f\|^2 \le \sum_{u>0}(f_{u+1}-f_u)^2\tilde \pi_u\lambda_uZ_\beta^{-1}\frac{1}{\varepsilon_{u+1}}\left(\frac{u}{1-{\rm e}^{-\beta}}-\frac{{\rm e}^{-\beta}\big(1-{\rm e}^{-\beta u}\big)}{\big(1-{\rm e}^{-\beta}\big)^2}\right)\\
\hphantom{\|f\|^2 \le}{}
+\sum_{u\ge 0}(f_{u+1}-f_u)^2\tilde \pi_u\lambda_uZ_\beta^{-1}\frac{1}{\varepsilon_{u+1}}\frac{1}{1-{\rm e}^{-\beta\omega_{u+1}}} \frac{1-{\rm e}^{-\beta(u+1)}}{1-{\rm e}^{-\beta}}.
\end{gather*}
Noting that, since $r+q\geq 2$ and so the sequence $(\omega_k)_{k\geq 1}$ is non-decreasing,
\begin{equation}\label{eq:epu-superlin}
\varepsilon_{u+1}= \sum_{k=1}^{u+1}\omega_k\ge \sum_{k=1}^{u+1}\omega_1 = u+1
\end{equation}
for all $u\ge 1$ we have
\begin{gather*}
 Z_\beta^{-1}\frac{1}{\varepsilon_{u+1}}\left(\frac{u}{1-{\rm e}^{-\beta}}-\frac{{\rm e}^{-\beta}\big(1-{\rm e}^{-\beta u}\big)}{\big(1-{\rm e}^{-\beta}\big)^2}+\frac{1}{1-{\rm e}^{-\beta\omega_{u+1}}} \frac{1-{\rm e}^{-\beta(u+1)}}{1-{\rm e}^{-\beta}}\right)\\
 \qquad{} = Z_\beta^{-1}\frac{1}{\varepsilon_{u+1}}\left(\frac{u+1}{1-{\rm e}^{-\beta}}\right)
 \leq \frac{Z_\beta^{-1}}{1 -{\rm e}^{-\beta}}.
\end{gather*}
In addition, for $u=0$,
\[
Z_\beta^{-1}\frac{1}{\varepsilon_{u+1}}\frac{1}{1-{\rm e}^{-\beta\omega_{u+1}}}
\frac{1-{\rm e}^{-\beta(u+1)}}{1-{\rm e}^{-\beta}} = \frac{Z_\beta^{-1}}{1-{\rm e}^{-\beta}},
\]
and so
\[
\|f\|^2 \leq - \frac{Z_\beta^{-1}}{1-{\rm e}^{-\beta}} \langle f,\mathcal Af\rangle.
\]
Finally, from the trivial inequality $Z_\beta\geq 1+{\rm e}^{-\beta}$, get the following result.
\begin{Theorem}\label{thm:diagonal_gap}
Suppose that $-1\le q\le 1<r$ with $r+q-2\ge 0$. Then for all $\beta>0$,
\[
\operatorname{gap}(\mathcal A)\geq Z_\beta \big(1-{\rm e}^{-\beta}\big) \geq 1-{\rm e}^{-2\beta}.
\]
\end{Theorem}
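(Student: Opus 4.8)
The plan is to realize $\operatorname{gap}(\mathcal A)$ as the optimal Poincaré constant for the reversible birth-and-death process and to bound it from below by a weighted Hardy-type estimate. Through the unitary identification $\jmath$ established above, the diagonal gap equals the spectral gap of $\mathcal A$ on $\ell^2(\tilde\pi)$, i.e., the largest $g$ for which $g\|f\|^2\le -\langle f,\mathcal A f\rangle=\sum_{n\ge 0}\tilde\pi_n\lambda_n(f_{n+1}-f_n)^2$ holds for every mean-zero $f$ (those with $\sum_n f_n\tilde\pi_n=0$). Thus it suffices to produce a constant $C$ with $\|f\|^2\le C\big(-\langle f,\mathcal A f\rangle\big)$ for all such $f$ and read off $g\ge 1/C$ with $C=Z_\beta^{-1}/\big(1-{\rm e}^{-\beta}\big)$.

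First I would carry out the Liggett argument set up above: for mean-zero $f$ write $\|f\|^2=\sum_{y<x}(f_y-f_x)^2\tilde\pi_x\tilde\pi_y$, bound each difference by Cauchy--Schwarz along the chain joining the smaller to the larger index, interchange the order of summation, and split the weight as $(y-x)=(u-x)+(y-u)$ to reach the estimate \eqref{eq:norm_bound}, in which the coefficient of $(f_{u+1}-f_u)^2\tilde\pi_u\lambda_u$ is assembled from the partial sums $\sum_{x\le u}(\cdot)\pi_x$ and the tails $\sum_{y>u}(\cdot)\pi_y$, divided by $\pi_u\lambda_u$. Next I would insert the explicit bounds for these partial sums (Lemmas~\ref{lem:A} and~\ref{lem:B}) and for the tails (Lemma~\ref{lem:tail_bound}), together with $\pi_{u+1}/\pi_u={\rm e}^{-\beta\omega_{u+1}}$ and the factorisation of $\pi_u\lambda_u$, to bound each coefficient by $Z_\beta^{-1}\varepsilon_{u+1}^{-1}$ times an explicit bracket depending on $\beta$, $u$ and $\omega_{u+1}$.

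The crux is the collapse of this bracket, and it is the step I expect to be the main obstacle. Here I would use two inputs that both rely on the hypothesis $r+q\ge 2$: the monotonicity of $(\omega_k)_{k\ge 1}$ from Lemma~\ref{lem:omega-eps}(1), which gives $\omega_{u+1}\ge\omega_1=1$ and hence $1/\big(1-{\rm e}^{-\beta\omega_{u+1}}\big)\le 1/\big(1-{\rm e}^{-\beta}\big)$, and the resulting superlinearity $\varepsilon_{u+1}=\sum_{k=1}^{u+1}\omega_k\ge u+1$ of \eqref{eq:epu-superlin}. Applying the first bound to the last piece, the three pieces of the bracket telescope via the elementary identity $\frac{u}{1-{\rm e}^{-\beta}}-\frac{{\rm e}^{-\beta}(1-{\rm e}^{-\beta u})}{(1-{\rm e}^{-\beta})^2}+\frac{1-{\rm e}^{-\beta(u+1)}}{(1-{\rm e}^{-\beta})^2}=\frac{u+1}{1-{\rm e}^{-\beta}}$ to exactly $(u+1)/\big(1-{\rm e}^{-\beta}\big)$; dividing by $\varepsilon_{u+1}\ge u+1$ then cancels the factor $(u+1)$ and leaves the uniform coefficient bound $Z_\beta^{-1}/\big(1-{\rm e}^{-\beta}\big)$. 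This cancellation is precisely what $r+q\ge 2$ buys: without the superlinear lower bound on $\varepsilon_{u+1}$ the coefficients would not be uniformly bounded and the scheme would break down.

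Summing over $u$ then yields $\|f\|^2\le \frac{Z_\beta^{-1}}{1-{\rm e}^{-\beta}}\sum_u(f_{u+1}-f_u)^2\tilde\pi_u\lambda_u=-\frac{Z_\beta^{-1}}{1-{\rm e}^{-\beta}}\langle f,\mathcal A f\rangle$, so that $\operatorname{gap}(\mathcal A)\ge Z_\beta\big(1-{\rm e}^{-\beta}\big)$, the first asserted inequality. Finally, keeping only the first two terms of the partition function, $Z_\beta=\sum_{n\ge 0}{\rm e}^{-\beta\varepsilon_n}\ge {\rm e}^{0}+{\rm e}^{-\beta}=1+{\rm e}^{-\beta}$, whence $Z_\beta\big(1-{\rm e}^{-\beta}\big)\ge\big(1+{\rm e}^{-\beta}\big)\big(1-{\rm e}^{-\beta}\big)=1-{\rm e}^{-2\beta}$, giving the second claimed inequality.
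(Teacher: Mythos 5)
Your proposal is correct and follows essentially the same route as the paper: Liggett's chain argument leading to the estimate \eqref{eq:norm_bound}, the partial-sum and tail bounds of Lemmas~\ref{lem:A}, \ref{lem:B} and~\ref{lem:tail_bound}, the monotonicity of $(\omega_k)_{k\ge 1}$ and the superlinearity \eqref{eq:epu-superlin} granted by $r+q\ge 2$, the collapse of the bracket to $(u+1)/\big(1-{\rm e}^{-\beta}\big)$, and finally $Z_\beta\ge 1+{\rm e}^{-\beta}$. In fact your write-up is slightly more careful than the paper's at one point: you correctly note that the last term of the bracket must first be bounded using $\omega_{u+1}\ge\omega_1=1$ before the telescoping identity applies, whereas the paper presents that step as an equality.
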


It turns out that, as the parameters change, in certain region the diagonal gap dominates and in some other region the off-diagonal minimum dominates. For example, let us compare the diagonal gap and off-diagonal minimum with fixed $q=1$. If $r>1$ is sufficiently large, then the diagonal gap dominates the off-diagonal minimum (see Figure \ref{fig:gap_diag_off-diag(r)}). On the other hand, when $r>1$ is near to $1$ and $\beta>0$ is sufficiently small, then the off-diagonal minimum dominates (see Figure \ref{fig:gap_diag_off-diag(beta)}).

\begin{figure}[h]\centering
\includegraphics[width=0.7\textwidth]{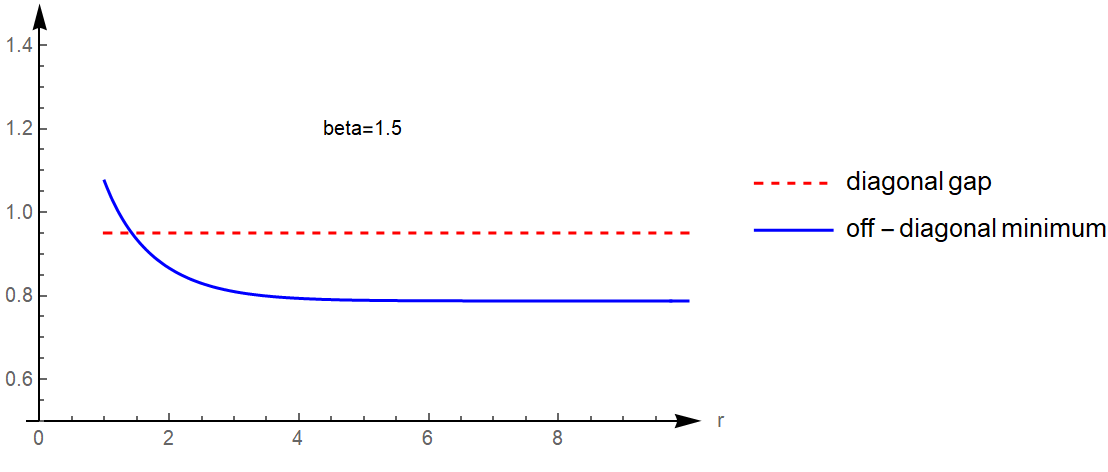}
\caption{Diagonal lower bound and off-diagonal minimum, $\beta=1.5$, $q=1$.}\label{fig:gap_diag_off-diag(r)}
\end{figure}

\begin{figure}[h]\centering
\includegraphics[width=0.7\textwidth]{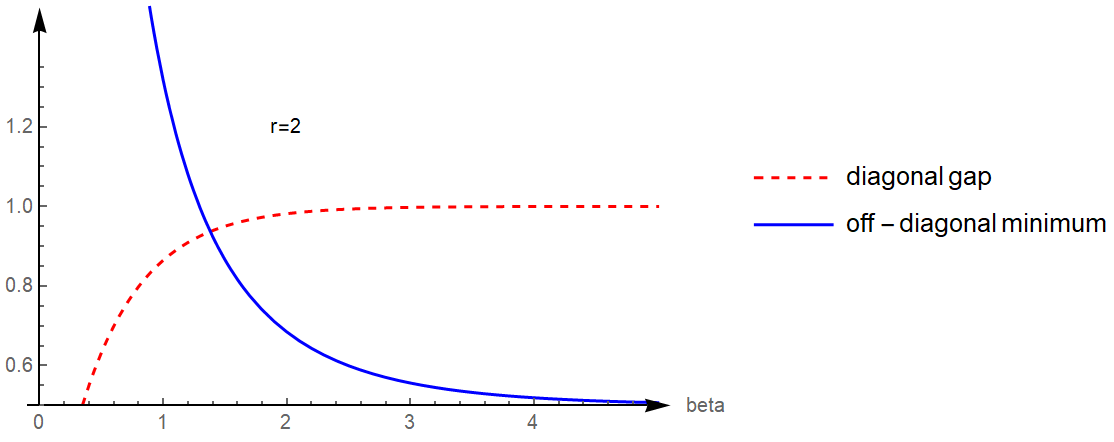}
\caption{Diagonal lower bound and off-diagonal minimum, $r=2$.}\label{fig:gap_diag_off-diag(beta)}
\end{figure}

In order to better understand which one among the off-diagonal minimum and the diagonal spectral gap is bigger
and convince ourselves that, for $\beta$ small, the spectral gap of the generator~$L$ is actually given by the
spectral gap of~$\mathcal{A}$, not just because of a poor estimate of the lower bound of Theorem~\ref{thm:diagonal_gap},
we can study the upper bound of the diagonal spectral gap.

In Appendix~\ref{sect:upp-bound}, by choosing a special $f$ and evaluating $-\langle f,\mathcal Af\rangle/\Vert f\Vert^2$, we found
\[
\operatorname{gap}(\mathcal A)\leq 1/\big(\big({\rm e}^{\beta}-1\big)\big(1-Z_\beta^{-1}\big)\big)
\]
showing that if $\beta$ is sufficiently small and $r>1$, $q<1$ are sufficiently near to $1$, then the spectral gap of the
generator $L$ coincides with the one of $\mathcal{A}$ the diagonal subalgebra. Moreover, we showed that the lower bound
of Theorem~\ref{thm:diagonal_gap} is near the optimal one for big $\beta$.

Summing up, from Proposition \ref{prop:gap_computation}, Theorems~\ref{thm:off-diag_gap}
and~\ref{thm:diagonal_gap}, we get the following
\begin{Theorem}\label{thm:sp-gap-all}
Suppose that $-1\le q\le 1<r$ with $r+q-2\ge 0$. Then for all $\beta>0$,
\[
\operatorname{gap}(L) =
\min\left\{ 1-{\rm e}^{-2\beta},\,
\frac{1}{2} \left(\frac{{\rm e}^\beta+1}{{\rm e}^{\beta}-1}
+\frac{r+q}{{\rm e}^{\beta (r+q-1)}-1}\right)\right\}.
\]
\end{Theorem}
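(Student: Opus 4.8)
The plan is to read the formula straight off the orthogonal decomposition of Proposition~\ref{prop:gap_computation}, which already expresses $\operatorname{gap}(L)$ as the minimum of the diagonal gap $g_0$ and the off-diagonal minimum $\inf_{j\neq k}\mathcal{E}(|e_j\rangle\langle e_k|)$. Since the two pieces are handled separately in Theorems~\ref{thm:off-diag_gap} and~\ref{thm:diagonal_gap}, the argument is essentially an assembly of these results. The one substantive verification is that the present hypotheses are strong enough to trigger the explicit off-diagonal value, and the one delicate point is the identification of the diagonal contribution.

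First I would settle the off-diagonal minimum. Theorem~\ref{thm:off-diag_gap} gives the closed form $\frac{1}{2}\big(\frac{{\rm e}^\beta+1}{{\rm e}^\beta-1}+\frac{r+q}{{\rm e}^{\beta(r+q-1)}-1}\big)$, attained at the pair $(0,1)$, precisely when the estimate of Lemma~\ref{lem:omega-eps}(2) holds for $c=1$, i.e.\ when $2(r+q)+rq\ge0$. That inequality is not part of the hypotheses, so I would derive it from $r+q\ge2$: if $q\ge0$ it is immediate because $rq\ge0$; if $q<0$ then $r+q\ge2$ forces $r>2$, and using $q\ge-1$ one gets $2(r+q)+rq=2r+q(2+r)\ge 2r-(2+r)=r-2>0$. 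Hence, for all $-1\le q\le1<r$ with $r+q\ge2$, the off-diagonal minimum equals the second entry of the claimed minimum, which I abbreviate $F$.

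It then remains to insert the diagonal gap. Proposition~\ref{prop:gap_computation} gives $\operatorname{gap}(L)=\min\{g_0,F\}$, and Theorem~\ref{thm:diagonal_gap}, via the unitary identification of the diagonal subalgebra with $\ell^2(\tilde\pi)$, provides $g_0=\operatorname{gap}(\mathcal{A})\ge 1-{\rm e}^{-2\beta}$. Substituting this lower bound yields $\operatorname{gap}(L)\ge\min\{1-{\rm e}^{-2\beta},F\}$, and whenever $F\le 1-{\rm e}^{-2\beta}$ the chain $g_0\ge 1-{\rm e}^{-2\beta}\ge F$ forces the minimum to be realized by $F$, so equality holds there. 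Reassembling the two entries produces the displayed expression.

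The hard part is therefore the diagonal side in the complementary regime $F>1-{\rm e}^{-2\beta}$ (typically $\beta$ small): there Theorem~\ref{thm:diagonal_gap} only bounds $\operatorname{gap}(\mathcal{A})$ from below, so promoting the combination to the stated equality requires arguing that this bound is effectively the quantity controlling $\operatorname{gap}(L)$. This is where I would invoke the upper bound $\operatorname{gap}(\mathcal{A})\le 1/\big(({\rm e}^\beta-1)(1-Z_\beta^{-1})\big)$ recalled from Appendix~\ref{sect:upp-bound}, together with the direct comparison of the two curves in Figures~\ref{fig:gap_diag_off-diag(r)} and~\ref{fig:gap_diag_off-diag(beta)}, to certify which entry dominates for given $\beta,r,q$ and to confirm that no finer diagonal estimate is needed for the final formula.
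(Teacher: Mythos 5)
Your proposal follows the same route as the paper: the paper's entire proof of Theorem~\ref{thm:sp-gap-all} is the single sentence ``Summing up, from Proposition~\ref{prop:gap_computation}, Theorems~\ref{thm:off-diag_gap} and~\ref{thm:diagonal_gap}, we get the following'', i.e.\ exactly the assembly you describe. In one respect you are more careful than the paper: Theorem~\ref{thm:off-diag_gap} is stated under the extra hypothesis $-2/3\le q\le 1$, which Theorem~\ref{thm:sp-gap-all} does not assume, and your observation that $r+q\ge 2$ together with $-1\le q\le 1$ already forces $2(r+q)+rq=2r+q(2+r)\ge r-2>0$ when $q<0$ (the actual condition needed in Lemma~\ref{lem:omega-eps}(2) with $c=1$) is correct and closes this mismatch, which the paper only gestures at in Remark~\ref{rem:j=0&k=1!}.

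On the diagonal side, the difficulty you flag in your last paragraph is genuine, but your proposed remedy does not resolve it --- and neither does the paper. The cited ingredients give $\operatorname{gap}(L)=\min\{g_0,F\}$ with $g_0=\operatorname{gap}(\mathcal A)\ge 1-{\rm e}^{-2\beta}$, which yields the stated formula whenever $F\le 1-{\rm e}^{-2\beta}$ (your chain of inequalities); when $F>1-{\rm e}^{-2\beta}$ one gets only $\operatorname{gap}(L)\ge 1-{\rm e}^{-2\beta}$, and equality would require $\operatorname{gap}(\mathcal A)=1-{\rm e}^{-2\beta}$ exactly, which no cited result provides. The appendix bound $\alpha(\beta,r,q)=1/\big(\big({\rm e}^\beta-1\big)\big(1-Z_\beta^{-1}\big)\big)$ cannot certify this: it does not coincide with $1-{\rm e}^{-2\beta}$ (for small $\beta$ it is of order $1/\beta$ while $1-{\rm e}^{-2\beta}$ is of order $\beta$), and the figures are numerical evidence, not proof. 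So your write-up reproduces the paper's argument faithfully, including its imprecision: read literally as an equality, the theorem is established by these ingredients only in the regime where the off-diagonal term realizes the minimum; in the complementary regime both your proof and the paper's deliver a lower bound in which the true diagonal gap has been replaced by its estimate from Theorem~\ref{thm:diagonal_gap}.
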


\appendix

\section{Inequalities for partial sums}
We collect here estimates on partial sums of series needed in the evaluation of the spectral gap.
\begin{Lemma}\label{lem:A}
For all $u\in \mathbb{N}$ we have
\[
\frac{\sum_{y>u}(y-u)\pi_y}{\pi_u\lambda_u}\le \frac{1}{\varepsilon_{u+1}\big(1-{\rm e}^{-\beta \omega_{u+1}}\big)}.
\]
\end{Lemma}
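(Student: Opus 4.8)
The plan is to substitute the explicit forms of the quantities appearing in the statement, reduce the claim to a clean inequality involving only the single variable $z={\rm e}^{-\beta\omega_{u+1}}$, and then close it with the monotonicity of $(\omega_k)_{k\ge 1}$ (available here by Lemma~\ref{lem:omega-eps}(1) since we are in the regime $r+q\ge 2$) together with the elementary geometric-series identity $\sum_{j\ge 1} j z^{j-1}=(1-z)^{-2}$.

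First I would insert $\pi_y={\rm e}^{-\beta\varepsilon_y}$ and $\lambda_u=\varepsilon_{u+1}/\big({\rm e}^{\beta\omega_{u+1}}-1\big)$, so that $\pi_u\lambda_u={\rm e}^{-\beta\varepsilon_u}\varepsilon_{u+1}/\big({\rm e}^{\beta\omega_{u+1}}-1\big)$. After multiplying the claimed inequality through by $\varepsilon_{u+1}$ and using ${\rm e}^{\beta\omega_{u+1}}-1={\rm e}^{\beta\omega_{u+1}}\big(1-{\rm e}^{-\beta\omega_{u+1}}\big)$ together with ${\rm e}^{\beta\omega_{u+1}}{\rm e}^{\beta\varepsilon_u}={\rm e}^{\beta\varepsilon_{u+1}}$, the target collapses to the equivalent statement
\[
{\rm e}^{\beta\varepsilon_{u+1}}\sum_{y>u}(y-u)\,{\rm e}^{-\beta\varepsilon_y}\le\frac{1}{\big(1-{\rm e}^{-\beta\omega_{u+1}}\big)^{2}}.
\]
Reindexing the sum with $j=y-u\ge 1$ rewrites its left-hand side as $\sum_{j\ge 1} j\,{\rm e}^{-\beta(\varepsilon_{u+j}-\varepsilon_{u+1})}$, which is the form I want because every exponent is now an increment measured from level $u+1$.

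The decisive step is the estimate $\varepsilon_{u+j}-\varepsilon_{u+1}=\sum_{k=u+2}^{u+j}\omega_k\ge (j-1)\,\omega_{u+1}$, valid for all $j\ge 1$ (with equality at $j=1$, where the sum is empty): the sum has exactly $j-1$ terms and each $\omega_k$ with $k\ge u+2$ dominates $\omega_{u+1}$ by monotonicity. Substituting this bound term by term and setting $z={\rm e}^{-\beta\omega_{u+1}}\in(0,1)$ yields
\[
\sum_{j\ge 1} j\,{\rm e}^{-\beta(\varepsilon_{u+j}-\varepsilon_{u+1})}\le\sum_{j\ge 1} j\,z^{\,j-1}=\frac{1}{(1-z)^{2}}=\frac{1}{\big(1-{\rm e}^{-\beta\omega_{u+1}}\big)^{2}},
\]
which is exactly the reduced inequality and completes the argument.

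I expect the only real subtlety to be the bookkeeping in the first reduction — keeping the factors ${\rm e}^{\pm\beta\varepsilon_u}$, ${\rm e}^{\beta\omega_{u+1}}-1$ and $1-{\rm e}^{-\beta\omega_{u+1}}$ straight so that the $\varepsilon_{u+1}$ cancels cleanly and the right-hand side really becomes a perfect square $(1-z)^{-2}$. Everything after that is forced: the monotonicity inequality for $\varepsilon_{u+j}-\varepsilon_{u+1}$ is precisely what converts the genuine (and a priori intractable) sum over the Fibonacci-type spectrum into a geometric majorant, and once that comparison is in place the differentiated geometric series supplies the closed form with no further effort.
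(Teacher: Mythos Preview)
Your proof is correct and follows essentially the same route as the paper's: substitute the explicit forms of $\pi_y$ and $\lambda_u$, use the monotonicity of $(\omega_k)_{k\ge 1}$ (from $r+q\ge 2$) to bound $\varepsilon_{u+j}-\varepsilon_{u+1}\ge (j-1)\omega_{u+1}$, and close with the differentiated geometric series. The only cosmetic difference is that the paper keeps the factor ${\rm e}^{\beta\omega_{u+1}}-1$ out front and bounds $\varepsilon_y-\varepsilon_u\ge (y-u)\omega_{u+1}$ directly (summing $\sum j z^j$ instead of your $\sum j z^{j-1}$), but the two computations are algebraically equivalent.
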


\begin{proof} Recalling that the sequence $(\omega_k)_{k\geq 1}$ is non-decreasing, by $r+q-2\geq 0$, we can write
\begin{align*}
\frac{\sum_{y>u}(y-u)\pi_y}{\pi_u\lambda_u}& =\frac{{\rm e}^{\beta\omega_{u+1}}-1}{\varepsilon_{u+1}}\sum_{y>u}(y-u){\rm e}^{-\beta(\varepsilon_y-\varepsilon_u)}
=\frac{{\rm e}^{\beta\omega_{u+1}}-1}{\varepsilon_{u+1}}\sum_{y>u}(y-u){\rm e}^{-\beta\sum_{j=u+1}^y\omega_j}\\
& \le \frac{{\rm e}^{\beta\omega_{u+1}}-1}{\varepsilon_{u+1}}\sum_{y>u}(y-u){\rm e}^{-\beta(y-u)\omega_{u+1}}
=\frac{1}{\varepsilon_{u+1}\left(1-{\rm e}^{-\beta \omega_{u+1}}\right)}.\tag*{\qed}
\end{align*}
\renewcommand{\qed}{}
\end{proof}

\begin{Lemma}\label{lem:B} For all $u\in \mathbb{N}$, $u\geq 1$, we have
\begin{gather*}
\sum_{x=0}^{u-1}(u-x)\pi_x\le \frac{u}{1-{\rm e}^{-\beta}}-\frac{{\rm e}^{-\beta}\big(1-{\rm e}^{-\beta u}\big)}{\big(1-{\rm e}^{-\beta}\big)^2},\\
\sum_{x=0}^u\pi_x\le \frac{1-{\rm e}^{-\beta(u+1)}}{1-{\rm e}^{-\beta}}.
\end{gather*}
\end{Lemma}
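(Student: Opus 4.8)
The plan is to derive both estimates from the single termwise bound $\pi_x \le {\rm e}^{-\beta x}$ and then evaluate the resulting elementary sums in closed form.

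First I would record the key inequality $\varepsilon_x \ge x$. Since $r+q\ge 2$, Lemma~\ref{lem:omega-eps}(1) guarantees that the sequence $(\omega_k)_{k\ge 1}$ is non-decreasing, and as $\omega_1 = \varepsilon_1 - \varepsilon_0 = 1$ we get $\omega_k \ge 1$ for every $k$; summing, $\varepsilon_x = \sum_{k=1}^x \omega_k \ge x$, exactly as in~\eqref{eq:epu-superlin}. Recalling $\pi_x = {\rm e}^{-\beta\varepsilon_x}$, this yields $\pi_x \le {\rm e}^{-\beta x}$ for all $x\ge 0$.

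The second inequality is then immediate: bounding each term and summing the geometric series,
\[
\sum_{x=0}^u \pi_x \le \sum_{x=0}^u {\rm e}^{-\beta x} = \frac{1-{\rm e}^{-\beta(u+1)}}{1-{\rm e}^{-\beta}}.
\]
For the first inequality, since the weights $u-x$ are non-negative for $0\le x\le u-1$, the termwise bound propagates to the weighted sum, so it suffices to compute $\sum_{x=0}^{u-1}(u-x)\,{\rm e}^{-\beta x}$ exactly. Writing $t={\rm e}^{-\beta}$, I would split it as $u\sum_{x=0}^{u-1}t^x-\sum_{x=0}^{u-1}x\,t^x$, evaluating the first sum geometrically and the second by differentiating $\sum_x t^x$ in $t$ (the standard arithmetico-geometric formula). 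Collecting the two pieces over the common denominator $(1-t)^2$ gives $\frac{u}{1-t}-\frac{t(1-t^u)}{(1-t)^2}$, which is precisely the claimed right-hand side with $t={\rm e}^{-\beta}$.

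No step presents a genuine obstacle; the only point requiring care is the bookkeeping in the closed-form evaluation of the arithmetico-geometric sum $\sum_{x=0}^{u-1}(u-x)t^x$, which must land exactly on $\frac{u}{1-t}-\frac{t(1-t^u)}{(1-t)^2}$. Note that the comparison sequence ${\rm e}^{-\beta x}$ makes this identity an equality, so the full claim is obtained by substituting $\pi_x\le{\rm e}^{-\beta x}$ in the nonnegatively weighted sum.
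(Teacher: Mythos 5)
Your proposal is correct and matches the paper's proof: the paper likewise notes $\varepsilon_x=\sum_{j=1}^x\omega_j\ge x\omega_1=x$ (so $\pi_x\le {\rm e}^{-\beta x}$) and then invokes the explicit summation formulae, which is exactly the geometric and arithmetico-geometric evaluation you carry out in detail. Your closed-form computation of $\sum_{x=0}^{u-1}(u-x)t^x=\frac{u}{1-t}-\frac{t(1-t^u)}{(1-t)^2}$ is accurate, so nothing is missing.
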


\begin{proof}
Notice that $\varepsilon_x=\sum_{j=1}^x\omega_j\ge x\omega_1=x$ (inequality (\ref{eq:epu-superlin})). Then both follow from the explicit summation formulae.
\end{proof}

\begin{Lemma}\label{lem:tail_bound}
For the tail of the invariant measure we have the bound.
\[
\pi_{u+1} \leq \sum_{y>u}\pi_y \leq \frac{\pi_{u+1}}{1- {\rm e}^{-\beta\omega_{u+1}}}.
\]
\end{Lemma}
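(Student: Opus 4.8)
The plan is to prove the two inequalities separately, the lower bound being immediate and the upper bound following the same geometric-series domination already used in Lemma~\ref{lem:A}. Throughout I use the relations $\pi_n = {\rm e}^{-\beta\varepsilon_n}$ and $\varepsilon_n = \sum_{j=1}^{n}\omega_j$, together with the monotonicity of the sequence $(\omega_k)_{k\geq 1}$ guaranteed by Lemma~\ref{lem:omega-eps}(1) under the standing assumption $r+q-2\geq 0$ of this section.

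For the lower bound I would simply observe that the series $\sum_{y>u}\pi_y$ has nonnegative terms and that $\pi_{u+1}$ is one of them, namely the first, so $\sum_{y>u}\pi_y \geq \pi_{u+1}$. No further work is needed here.

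For the upper bound, the idea is to factor out the leading term $\pi_{u+1}={\rm e}^{-\beta\varepsilon_{u+1}}$ and write
\[
\sum_{y>u}\pi_y = \pi_{u+1}\sum_{y\geq u+1}{\rm e}^{-\beta(\varepsilon_y-\varepsilon_{u+1})}.
\]
The key step is to bound the exponent from below. Since $\varepsilon_y-\varepsilon_{u+1}=\sum_{j=u+2}^{y}\omega_j$, and every index $j\geq u+2$ satisfies $\omega_j\geq \omega_{u+1}$ by the monotonicity of $(\omega_k)$, one obtains $\varepsilon_y-\varepsilon_{u+1}\geq (y-u-1)\,\omega_{u+1}$. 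Substituting $m=y-u-1$ and dominating by the geometric series then gives
\[
\sum_{y\geq u+1}{\rm e}^{-\beta(\varepsilon_y-\varepsilon_{u+1})}
\leq \sum_{m\geq 0}{\rm e}^{-\beta m\,\omega_{u+1}}
= \frac{1}{1-{\rm e}^{-\beta\omega_{u+1}}},
\]
and multiplying back by $\pi_{u+1}$ yields the claimed upper bound $\sum_{y>u}\pi_y\leq \pi_{u+1}/(1-{\rm e}^{-\beta\omega_{u+1}})$.

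The computation is elementary, so I do not expect a genuine obstacle. The only point requiring care is the monotonicity step: one must verify that $\omega_j\geq \omega_{u+1}$ holds for \emph{every} index $j\geq u+2$ appearing in the telescoping expression for $\varepsilon_y-\varepsilon_{u+1}$, which is exactly what Lemma~\ref{lem:omega-eps}(1) provides, and one must keep the index shift $m=y-u-1$ consistent so that the summation starts at $m=0$.
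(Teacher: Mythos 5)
Your proof is correct and follows essentially the same argument as the paper: both bound $\varepsilon_y-\varepsilon_{u+1}$ (the paper uses $\varepsilon_y-\varepsilon_u$) below by a multiple of $\omega_{u+1}$ via the monotonicity of $(\omega_k)_{k\geq 1}$ and then sum the resulting geometric series, the only difference being that you factor out $\pi_{u+1}$ while the paper factors out $\pi_u$ and absorbs the extra ${\rm e}^{-\beta\omega_{u+1}}$ at the end. The index bookkeeping you flag is handled correctly, so nothing is missing.
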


\begin{proof} The lower bound is obvious. For the upper bound,
\begin{align*}
\sum_{y>u}\pi_y &= \sum_{y>u}{\rm e}^{-\beta \varepsilon_y}
 = {\rm e}^{-\beta \varepsilon_u}\sum_{y>u}{\rm e}^{-\beta (\varepsilon_y-\varepsilon_u)}
 = {\rm e}^{-\beta \varepsilon_u}\sum_{y>u}{\rm e}^{-\beta \sum_{j=u+1}^y\omega_j}\\
&\le {\rm e}^{-\beta \varepsilon_u}\sum_{y>u}{\rm e}^{-\beta (y-u)\omega_{u+1}}
={\rm e}^{-\beta \varepsilon_u}\frac{{\rm e}^{-\beta\omega_{u+1}}}{1-{\rm e}^{-\beta \omega_{u+1}}}
=\frac{{\rm e}^{-\beta\varepsilon_{u+1}}}{1-{\rm e}^{-\beta \omega_{u+1}}}=\frac{\pi_{u+1}}{1-{\rm e}^{-\beta \omega_{u+1}}}.\tag*{\qed}
\end{align*}
\renewcommand{\qed}{}
\end{proof}

\section{An upper bound for the diagonal spectral gap}\label{sect:upp-bound}
We first consider the limiting case: $r\to 1^+$, $q\to 1^{-}$. In that case the jump rates become
\begin{gather*}
\lambda_n(\beta,r,q)\underset{r\to 1^+,\, q\to 1^{-}}\longrightarrow \lambda_n(\beta,1,1)=\frac{n+1}{{\rm e}^\beta-1},
\qquad \mu_n(\beta,r,q)\underset{r\to 1^+,\, q\to 1^{-}}\longrightarrow \mu_n(\beta,1,1)=\frac{n \,{\rm e}^\beta}{{\rm e}^\beta-1}
\end{gather*}
and, for small $\beta$ (i.e., high temperatures), the diagonal spectral gap is near $1$ as expected by \cite[Section~5]{CaFa}
or \cite[Section~7]{CiFaLi}.

Next, we find an upper bound with a simple $(f_n)_{n\ge 0}$.
Define for $u\ge 0$, $Z_\beta(u):=\sum_{x\ge u}\pi_x$; $Z_\beta:=Z_\beta(0)$. Put $f=(f_n)_{n\ge 0}$ as
\[
f_0=1,\qquad f_n=-c, \quad \text{for }n\ge 1,
\]
where $c$ is chosen so that $\sum_{x\ge 0}f_x\pi_x=0$, and hence $c=\pi_0/Z_\beta(1)=1/(Z_\beta -1)$,
\begin{gather*}
\mathcal E(f) = \sum_{x=0}^\infty\tilde\pi_x\lambda_x(f_{x+1}-f_x)^2 = Z_\beta^{-1}\pi_0\lambda_0(1+c)^2
= \frac{Z_\beta\,\lambda_0}{Z_\beta(1)^2}, \\
\|f\|^2 = \sum_{x=0}^\infty\tilde\pi_xf_x^2 = Z_\beta^{-1}\left(\pi_0+c^2Z_\beta(1)\right)= \frac{1}{Z_\beta(1)}.
\end{gather*}
Therefore,
\[
{\rm gap}(\mathcal A)\le \frac{Z_\beta\,\lambda_0}{Z_\beta(1)}
=\frac{1}{\big({\rm e}^{\beta}-1\big)\big(1-Z_\beta^{-1}\big)}:=\alpha(\beta,r,q).
\]
Noticing that $\alpha(\beta,r,q)$ is continuous with respect to both $\beta>0$, $r> 1$ and $0<q<1$
\[
\alpha(\beta,1,1):=\lim_{r\to 1^{+},\, q\to 1^{-}}\alpha(\beta,r,q)=\frac1{1-{\rm e}^{-\beta}}.
\]
On the other hand, comparing with the off-diagonal minimum for $r=q=1$, which is
$\big(1+3{\rm e}^{-\beta}\big)/2\big(1-{\rm e}^{-\beta}\big)$, we see that when $0<\beta<\log 3$,
\[
\frac{1}{1-{\rm e}^{-\beta}}<\frac{1+3{\rm e}^{-\beta}}{2\big(1-{\rm e}^{-\beta}\big)},
\]
which says, together with the continuity of both $\alpha(\beta,r,q)$ and the formula of off-diagonal minimum, that if $\beta$
is sufficiently small and $r>1$, $q<1$ are sufficiently near to $1$, then the spectral gap of the QMS occurs at the diagonal subalgebra.

Fix $q=1$ and let's find regions in the $r$-$\beta$ plane to see which gap would dominates. In
Figure~\ref{fig:gap dominators}, the upper line is the level curve found by equating the off-diagonal minimum and the diagonal
lower bound, $1-{\rm e}^{-2\beta}$. In the above the curve, the off-diagonal minimum is less than the lower bound of the diagonal gap,
and hence in that region the spectral gap occurs in the off-diagonal subspace. Similarly, the lower line in the figure is the level curve
obtained by equating the off-diagonal minimum and the upper bound of the diagonal gap, $\alpha(\beta,r,1)$. In the region below the curve,
the off-diagonal minimum dominates the diagonal upper bound and therefore the spectral gap occurs in the diagonal subspace.
\begin{figure}[h]\centering
\includegraphics[width=0.7\textwidth]{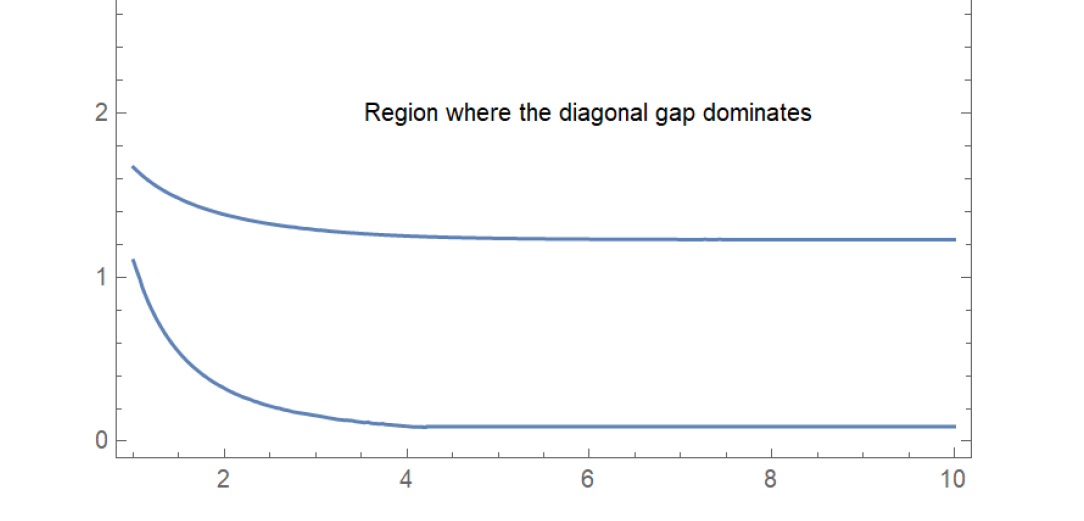}
\caption{Gap dominators: above the upper curve, the spectral gap occurs at off-diagonal subspaces and below the lower curve, the
spectral gap occurs at the diagonal subspace ($q=1$, horizontal axis for $r$ and vertical axis for $\beta$).}\label{fig:gap dominators}
\end{figure}

\subsection*{Acknowledgements}
The work of HJY was supported by the National Research Foundation of Korea (NRF) grant funded by the Korean government (MSIT) (no.\ 2020R1F1A101075). FF is a member of GNAM\-PA-INdAM Italy.
FF first met Michael Sch\"urmann at the conference Quantum Probability and Applications III held in Oberwolfach, January 25--31, 1987,
organized by their respective advisors Professors Luigi Accardi and Wilhelm von Waldenfels \cite{Sch}. Over most of these years he has had the
pleasure of meeting him at the annual QP conferences, that nowadays reached the number~42, visiting him in Greifswald, exchange views and
follow reports on his scientific work.
He would like to congratulate Michael on his retirement and wish him endless happy days with his friends and family.

\pdfbookmark[1]{References}{ref}
\LastPageEnding

\end{document}